\newcommand{\Tcal}{\mathcal{T}}
\newcommand{\Ical}{\mathcal{I}}
\newcommand{\Pcal}{\mathcal{P}}
\newcommand{\Pprob}{\mathbb{P}}
\def\tr#1{\mbox{tr}\left[{#1}\right]}
\theoremstyle{plain}
\newtheorem{thm}{\protect\theoremname}
\theoremstyle{plain}
\newtheorem{lem}{\protect\lemmaname}
\theoremstyle{plain}
\theoremstyle{remark}
\newtheorem*{rem*}{\protect\remarkname}
\theoremstyle{plain}
\theoremstyle{plain}
\newtheorem{cor}{\protect\corollaryname}
\theoremstyle{definition}
\newtheorem{defn}{\protect\definitionname}
\theoremstyle{plain}
\newtheorem{obs}{\protect\observationname}
\theoremstyle{plain}
\newtheorem*{thm*}{\protect\theoremname}
\theoremstyle{plain}
\newtheorem*{lem*}{\protect\lemmaname}
\providecommand{\propositionname}{Proposition}
\providecommand{\theoremname}{Theorem}
\providecommand{\lemmaname}{Lemma}
\providecommand{\remarkname}{Remark}
\providecommand{\conjecturename}{Conjecture}
\providecommand{\definitionname}{Definition}
\providecommand{\corollaryname}{Corollary}
\providecommand{\observationname}{Observation}
\def\bra#1{\langle{#1}\vert}
\def\ket#1{\vert{#1}\rangle}
\def\braket#1{\langle{#1}\rangle}
\def\BraVert{e.g.,roup\,\mid\,\bgroup}
\def\ketbra#1#2{\vert{#1}\rangle\!\langle{#2}\vert}
\begin{document}

\title{Hidden Quantum Memory: \\Is Memory There When Somebody Looks?}

\author{Philip Taranto}
\orcid{0000-0002-4247-3901}
\email{philipguy.taranto@phys.s.u-tokyo.ac.jp}
\affiliation{Department of Physics, Graduate School of Science, The University of Tokyo, 7-3-1 Hongo, Bunkyo City, Tokyo 113-0033, Japan}
\affiliation{Atominstitut, Technische Universit{\"a}t Wien, 1020 Vienna, Austria}
\affiliation{Institute for Quantum Optics and Quantum Information, Austrian Academy of Sciences, Boltzmanngasse 3, 1090 Vienna, Austria}

\author{Thomas J. Elliott}
\orcid{0000-0001-5392-9241}
\affiliation{Department of Physics \& Astronomy, University of Manchester, Manchester M13 9PL, United Kingdom}
\affiliation{Department of Mathematics, University of Manchester, Manchester M13 9PL, United Kingdom}

\author{Simon Milz}
\orcid{0000-0002-6987-5513}
\affiliation{School of Physics, Trinity College Dublin, Dublin 2, Ireland}
\affiliation{Institute for Quantum Optics and Quantum Information, Austrian Academy of Sciences, Boltzmanngasse 3, 1090 Vienna, Austria}
\affiliation{Faculty of Physics, University of Vienna, Boltzmanngasse 5, 1090 Vienna, Austria}

\maketitle

\begin{abstract}
In classical physics, memoryless dynamics and Markovian statistics are one and the same. This is not true for quantum dynamics, first and foremost because quantum measurements are invasive. Going beyond measurement invasiveness, here we derive a novel distinction between classical and quantum processes, namely the possibility of \textit{hidden quantum memory}. While Markovian statistics of classical processes can always be reproduced by a memoryless dynamical model, our main result shows that this is not true in quantum mechanics: We first provide an example of quantum non-Markovianity whose manifestation depends on whether or not a previous measurement is performed---an impossible phenomenon for memoryless dynamics; we then strengthen this result by demonstrating statistics that are Markovian independent of how they are probed, but are nonetheless \textit{still} incompatible with memoryless quantum dynamics. Thus, we establish the existence of Markovian statistics gathered by probing a quantum process that nevertheless \textit{fundamentally} require memory for their creation.
\end{abstract}

Our ability to understand and control memory effects in the evolution of open quantum systems is becoming increasingly important as technology allows us to manipulate interactions with increasing levels of speed, precision and complexity~\cite{Preskill_2018,Acin_2018}. Control over memory can be advantageous in various tasks, such as creating, manipulating and preserving coherences and correlations~\cite{Banaszek_2004,Barreiro_2010}, simulating complex dynamics~\cite{Sayrin_2011,Grimsmo_2015,Iles-Smith_2016,Cerrillo_2016,Metelmann_2017,Whalen_2017,Basilewitsch_2017,Fischer_2019,Luchnikov_2019,Jorgensen_2019,Liu_2019,Jorgensen_2020,Magrini_2021,Elliott_2022}, implementing randomised benchmarking and error correction~\cite{Ball_2016,Figueroa_2021PRX,Figueroa_2022}, performing optimal dynamical decoupling~\cite{Viola_1999,Biercuk_2009,Addis_2015}, designing quantum circuit architectures~\cite{Chiribella_2008_Quantum,Chiribella_2009,Mavadia_2018,White_2020,White_2021,Taranto_2021_Exp,White_2021-2}, and improving the efficiency of thermodynamic machines~\cite{Bylicka_2016,Kato_2016,Taranto_2020,Taranto_2023}. 

One has no choice but to account for complex noise and memory effects when modelling realistic dynamical systems, as no system is truly isolated; in general, the environment stores information about the past and propagates it in time, leading to memory effects that manifest themselves as complex multi-time correlations~\cite{Rivas_2014,Breuer_2016,deVega_2017,Li_2018,TarantoThesis}. A special case of open dynamics are memoryless dynamics, for which the environment retains \textit{no} memory of its previous interactions with the system. Such dynamics have been studied extensively due to their accuracy in many practically relevant situations and their exponentially reduced complexity from the general scenario. Both in the classical and quantum setting, such efficient descriptions arise by way of (time-local) master equations that efficiently simulate the system dynamics~\cite{Lindblad_1976,Gorini_1976,Manzano_2020}; in practice, the assumption of memorylessness is often made for simplicity and describes many `real-world' scenarios with a high degree of accuracy~\cite{Carmichael,Breuer,vanKampen_2011,Wio_2012}. 

However, experimentally determining that a quantum process is memoryless requires full process tomography, which necessitates a myriad of complex sequential measurements and has consequently only been done for low-dimensional/few timestep cases~\cite{ringbauer_characterizing_2015, White_2020,Taranto_2021_Exp,White_2021}. A more tractable situation is the sequential probing of a fixed observable via sharp, projective measurements. In this case, memoryless quantum processes---like their classical counterparts---lead to Markovian statistics, i.e., statistics where the future is conditionally independent of the past. Thus, at first glance, memorylessness of the dynamics seems to manifest on the experimental level identically for classical and quantum processes. However, this is not the case; for one, quantum measurements of any observable are generally invasive, leading to inconsistent \mbox{(sub-)statistics}~\cite{Milz_2017_Kolmogorov} and the violation of Leggett-Garg inequalities~\cite{Leggett_1985,Leggett_2008,Emary_2014}. In contradistinction, measuring an observable in the classical world can be done non-invasively. Moreover---beyond measurement invasiveness---here we demonstrate that quantum processes can yield Markovian statistics \textit{that fundamentally require memory} for their creation. 

More concretely, in classical physics, \textit{any} Markovian statistics can be described by a memoryless dynamical model (i.e., as emerging from a sequence of independent stochastic matrices). In the quantum case, measuring a fixed observable no longer constitutes a tomographically complete procedure; consequently, the existence of processes with memory that nonetheless lead to Markovian statistics when said observable is probed is not surprising \textit{per se} and has been demonstrated~\cite{Taranto_2019L,Taranto_2019A,Taranto_2021S}. This phenomenon notwithstanding, for any (quantum) experiment that yields Markovian statistics, it is reasonable to believe that there \textit{always exists} some memoryless quantum dynamics that faithfully reproduces the observed statistics. Such a description is known as the \emph{quantum regression formula} \textbf{(QRF)}~\cite{Lax_1963,Carmichael,Breuer} and is a widely used assumption that links operational quantities---namely, recorded statistics---to dynamical ones---namely, a model of the underlying dynamics. 

Here, we ask the question: \emph{Can Markovian statistics always be faithfully reproduced by a memoryless dynamical model?} In other words, can the QRF \emph{always} be employed to describe Markovian statistics? Our main result, perhaps surprisingly, answers this in the negative. Since this contradicts the counterpart answer within classical physics (i.e., for sharp measurements of a given observable), we thus uncover a new type of genuinely quantum phenomenon: \emph{Hidden quantum memory}. This observation makes quantum memory an emergent phenomenon: Observing Markovianity with respect to a fixed measurement basis is not sufficient to guarantee the existence of a memoryless dynamical descriptor. Such hidden quantum memory is similar in spirit to other quantum traits that require precisely the resource in their implementation that they ultimately hide, such as quantum channels that preserve all separable states but cannot be implemented via local operations and classical communication~\cite{Plenio_2007,Horodecki_2009,Chitambar_2020}, non-signalling maps that require signalling~\cite{beckman_causal_2001}, and maximally incoherent operations that necessitate coherent resources~\cite{Chitambar_2016_Critical,Chitambar_2016_Comparison,Marvian_2016}. We begin by outlining the envisaged setup before detailing key properties of memoryless dynamics (both classical and quantum).

\section{Framework}

In any experimental scheme concerning temporal processes, an experimenter probes a system of interest at (any subset of) times $\mathcal{T}_n := \{ t_1, \hdots, t_n \}$ (with $t_n > \dots > t_1$) and records the corresponding probability distributions $\{\Pprob(\boldsymbol{x}_\Gamma)\}$, where $\Gamma \subseteq \Tcal_n$ and $\boldsymbol{x}_\Gamma:= \{x_j|t_j\in \Gamma\}$ (see Fig.~\ref{fig::StochProc}). These capture, for instance, the probability that $x_1$ is observed at time $t_1$ \emph{and} $x_2$ at $t_2$, and so on, with all possible combinations of measurement times. Note that the experimenter can also \emph{not} make a measurement at any intermediate time, e.g., record $\mathbbm{P}(x_3, x_1)$ without measuring at $t_2$. 

\begin{figure}
    \centering
    \includegraphics[width=0.95\linewidth]{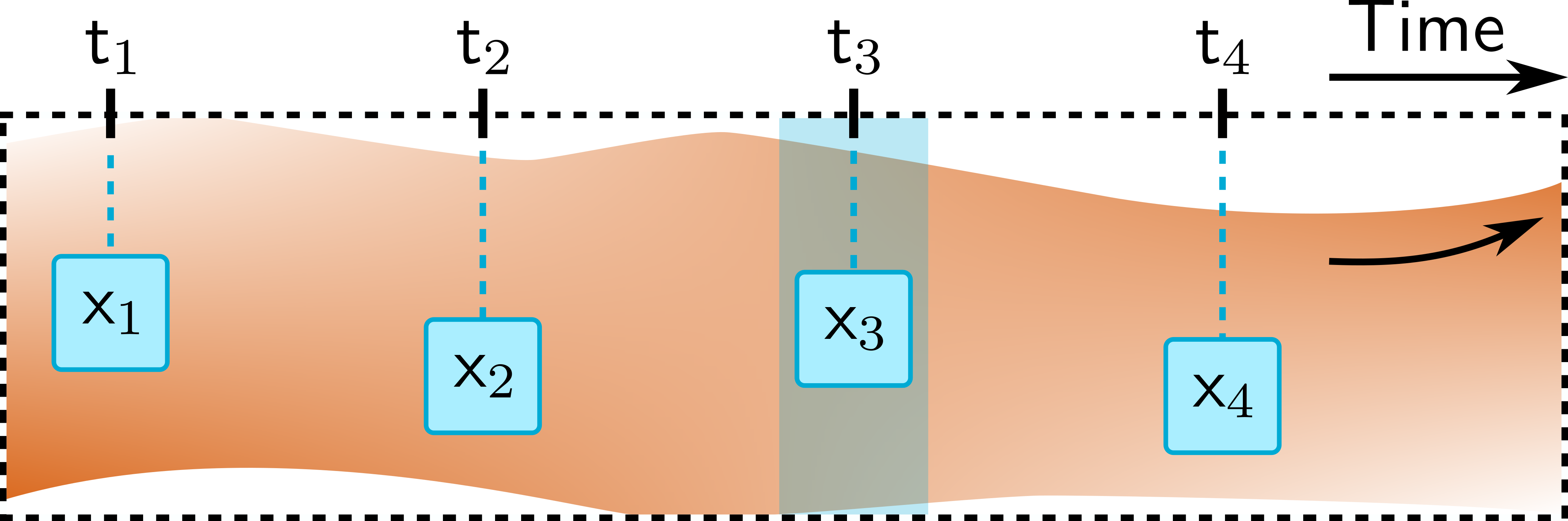}
    \caption{\textit{Probing dynamics.} By probing a process---e.g., Brownian motion, or the evolution of a spin degree of freedom---sequentially (here, at times $\Tcal_4=\{t_1,t_2,t_3,t_4\}$), an experimenter can deduce the probability distribution $\Pprob(x_4, x_3,x_2,x_1)$. In the classical case, this also includes all `contained' distributions, e.g., $\Pprob(x_4,x_2,x_1)$ via marginalisation [see Eq.~\eqref{eq:qrf-classicalmarginalisation}]. In the quantum case, due to invasiveness, deducing said distributions requires a new experiment where \emph{no} measurement is performed at $t_3$ (depicted by the shaded box).}
    \label{fig::StochProc}
\end{figure}

Independent of the physical scenario---it could be classical, quantum, or even post-quantum---one can define the concept of \emph{Markovianity} based on the observed statistics alone, as conditional independence of any current outcome from all but the most recent one. Concretely, we have the following working definition: 
\begin{defn}
\label{def::Markovian}
A \emph{Markovian statistics} on a set of times $\Tcal_n$ is a collection of conditional probability distributions $\{\Pprob(x_j|x_{j-1},\dots, x_1)\}_{t_j \in \Tcal_n}$ for which
\begin{gather}
   \Pprob(x_j|x_{j-1},\dots, x_1) = \mathbbm{P}(x_j|x_{j-1})
\end{gather}
for all $t_j \in \Tcal_n$.
\end{defn}
Defined as such, the question of Markovianity is, a priori, \emph{theory independent} and concerns only the observed statistics. As we shall see, though, the concept of conditional probabilities is a subtle issue that depends on the envisaged physical scenario. Throughout this article, we distinguish Markovianity from the notion of \emph{memoryless dynamics}, which corresponds to the memory properties of the \emph{underlying} dynamics that engenders the observed statistics, thereby making the latter a \emph{theory-dependent} concept.

Specifically, the question of memorylessness concerns whether, throughout the evolution of a system that is coupled to some inaccessible environment, said environment perpetuates past information about the system forward in time or irretrievably dissipates it.\footnote{We consider memory to be a property that is external to the system, i.e., stored in the environment, rather than information encoded in the system itself.} The description of such open evolution differs across physical theories: In the classical setting, the most general state transformations are stochastic matrices, whereas in the quantum realm, these are quantum channels. Probability distributions arising from interrogating either classical or quantum processes therefore have different properties since they are calculated via different rules. Our main result shows that such a distinction holds for the relationship between Markovianity and memorylessness: Although equivalent in the classical case, in the quantum realm the observation of Markovian statistics does \emph{not} guarantee even the existence of a memoryless dynamics that engenders them. 

\section{Classical Dynamics}

We begin with a discussion of memoryless \textit{classical} dynamics:
\begin{defn}\label{def:qrf-classicalmemoryless}
A \emph{memoryless classical dynamics} on $\mathcal{T}_n$ is a set of mutually independent stochastic matrices $\{ S_{j:j-1}\}_{j=2,\hdots,n}$ and an initial state (i.e., probability vector) $\boldsymbol{p}
_1$ such that the probability distribution over any sequence of outcomes $x_1,\hdots,x_n$ is given by
\begin{align}\label{eq:qrf-classicalregressionformula}
    &\mathbbm{P}(x_n,\hdots,x_1) = \notag \\*
    &\bra{x_n} S_{n:n-1} \ketbra{x_{n-1}}{x_{n-1}} \hdots \ketbra{x_2}{x_2}S_{2:1}\ketbra{x_1}{x_1} \boldsymbol{p}
_1,
\end{align}
where $\ketbra{x_j}{x_j}$ are projectors corresponding to measurement outcomes $x_j$.
\end{defn}

Although the environment plays a role in dictating the state transitions between any times $t_{j-1}$ and $t_{j}$---namely via the stochastic matrices $S_{j:j-1}$, which are matrices with non-negative elements whose columns sum to unity---in \emph{memoryless} processes, the environment does not propagate information, i.e., the stochastic matrices in Eq.~\eqref{eq:qrf-classicalregressionformula} are mutually \emph{independent}. On the other hand, Markovianity (see Def.~\ref{def::Markovian}) concerns only the observed statistics [l.h.s. of Eq.~\eqref{eq:qrf-classicalregressionformula}]. In classical physics, we can make the following simple observation (see App.~\ref{app:qrf-classical}):

\begin{obs}\label{obs:qrf-classicalmemorylessimpliesmarkov}
In the classical setting, memoryless dynamics are equivalent to Markovian statistics.
\end{obs}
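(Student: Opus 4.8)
The plan is to prove Observation~\ref{obs:qrf-classicalmemorylessimpliesmarkov} by establishing both implications separately, exploiting the fact that classical projective measurements are non-invasive, so marginalisation is available: for any $\Gamma \subseteq \Tcal_n$, $\Pprob(\boldsymbol{x}_\Gamma) = \sum_{\boldsymbol{x}_{\Tcal_n \setminus \Gamma}} \Pprob(x_n,\hdots,x_1)$. This will be the workhorse identity, since it lets us freely pass between the joint distribution of Eq.~\eqref{eq:qrf-classicalregressionformula} and all conditionals appearing in Def.~\ref{def::Markovian}.

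For the direction ``memoryless $\Rightarrow$ Markovian,'' I would start from Eq.~\eqref{eq:qrf-classicalregressionformula} and observe that, because each $\ketbra{x_j}{x_j}$ is a rank-one projector, the matrix product telescopes into a chain of scalar factors: writing $\Pprob(x_j|x_{j-1}) := \bra{x_j} S_{j:j-1} \ket{x_{j-1}}$, one gets $\Pprob(x_n,\hdots,x_1) = \Pprob(x_n|x_{n-1}) \hdots \Pprob(x_2|x_1) \Pprob(x_1)$, where $\Pprob(x_1) = \bra{x_1}\boldsymbol{p}_1$. This is manifestly a first-order Markov chain factorisation; dividing by the appropriate marginal (itself obtained by summing out the future variables, which by column-stochasticity of each $S_{j:j-1}$ collapses the tail of the product) yields $\Pprob(x_j|x_{j-1},\hdots,x_1) = \Pprob(x_j|x_{j-1})$, i.e.\ Def.~\ref{def::Markovian} holds. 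The only mild care needed is handling conditioning events of zero probability, which is a standard caveat.

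For the converse ``Markovian $\Rightarrow$ memoryless,'' I would take a Markovian statistics as in Def.~\ref{def::Markovian} and reverse-engineer a memoryless classical dynamics reproducing it. Set $\boldsymbol{p}_1$ to have entries $\Pprob(x_1)$, and for each $j$ define $S_{j:j-1}$ columnwise via $\bra{x_j} S_{j:j-1} \ket{x_{j-1}} := \Pprob(x_j|x_{j-1})$ (choosing any valid column, e.g.\ the uniform distribution, when $\Pprob(x_{j-1}) = 0$ so the conditional is undefined). Each such $S_{j:j-1}$ has non-negative entries and columns summing to one, hence is a genuine stochastic matrix, and the $\{S_{j:j-1}\}$ are constructed independently of one another, so the dynamics is memoryless by Def.~\ref{def:qrf-classicalmemoryless}. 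Plugging these into Eq.~\eqref{eq:qrf-classicalregressionformula} and using the telescoping-scalar observation above reproduces $\Pprob(x_n,\hdots,x_1) = \Pprob(x_n|x_{n-1})\hdots\Pprob(x_2|x_1)\Pprob(x_1)$, and the chain rule together with the Markov property of the assumed statistics shows this equals the original joint distribution; all contained distributions then agree by marginalisation.

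The main obstacle is not any single computation but rather being careful about what ``Markovian statistics'' supplies: Def.~\ref{def::Markovian} is phrased in terms of conditionals, so one must verify that the joint distributions they determine via the chain rule are consistent (they are, by construction) and that the reconstruction is insensitive to the zero-probability ambiguities. I would also want to note explicitly that this equivalence is precisely where classical and quantum physics diverge: the argument uses non-invasiveness twice over—once to guarantee that all marginals of a single joint distribution are themselves the operationally observed statistics, and once implicitly in treating the $\ketbra{x_j}{x_j}$ as leaving the classical state otherwise undisturbed—so no analogous construction is available quantumly, foreshadowing the main result.
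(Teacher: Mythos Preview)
Your proposal is correct and follows essentially the same approach as the paper: both directions are handled identically, by telescoping Eq.~\eqref{eq:qrf-classicalregressionformula} into the product $\Pprob(x_n|x_{n-1})\cdots\Pprob(x_2|x_1)\Pprob(x_1)$ via $\bra{x_j}S_{j:j-1}\ket{x_{j-1}}$ for the forward direction, and reverse-engineering the stochastic matrices from the given conditionals for the converse. Your explicit treatment of zero-probability conditioning events is a small addition the paper leaves implicit.
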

Specifically, this equivalence is given by setting $\bra{x_j}S_{j:j-1}\ket{x_{j-1}} = \mathbbm{P}(x_j|x_{j-1})$, and it follows from Eq.~\eqref{eq:qrf-classicalregressionformula} that for any Markovian statistics/memoryless classical dynamics we have
\begin{align}\label{eq:qrf-classicalmarkov}
    \mathbbm{P}(x_n,\hdots, x_1) =& \mathbbm{P}(x_n|x_{n-1})  \mathbbm{P}(x_{n-1}|x_{n-2}) \, \hdots \notag \\*
    &\times \hdots \, \mathbbm{P}(x_2|x_1) \, \mathbbm{P}(x_1).
\end{align}
In one direction, Obs.~\ref{obs:qrf-classicalmemorylessimpliesmarkov} states that for any memoryless dynamics, the observed statistics are Markovian---this is also true in the quantum setting (see below). Conversely, if one records Markovian statistics by probing a classical process, then one can always construct a unique, memoryless dynamical model that faithfully reproduces them---as we will see, this is not true for statistics gathered from quantum processes. 

A major distinction between classical and quantum processes (memoryless or not) is that in the classical realm, the single $n$-time probability distribution $\mathbbm{P}(x_n, \hdots ,x_1)$ contains the entire set of statistics on all subsets of times $\Gamma\subseteq \Tcal_n$. 
That is, the probability $\Pprob(\boldsymbol{x}_{\Gamma}, \Ical_{\overline \Gamma})$ to observe a sequence of outcomes $\boldsymbol{x}_{\Gamma}$ when probing the process at times $\Gamma$ and \emph{not} measuring (denoted by the `do-nothing' instrument $\Ical_{\overline\Gamma}$) at the remaining times $\overline \Gamma := \Tcal_n\setminus \Gamma$ can be deduced via marginalisation 
\begin{gather}
    \label{eq:qrf-classicalmarginalisation}
    \mathbbm{P}(\boldsymbol{x}_{\Gamma},\mathcal{I}_{\overline \Gamma}) = \sum_{\boldsymbol{x}_{\overline \Gamma}} \mathbbm{P}(x_n, \hdots, x_1),
\end{gather}
This \emph{non-invasiveness} of measurements in classical physics underlies Obs.~\ref{obs:qrf-classicalmemorylessimpliesmarkov} and similarly fails to hold in quantum mechanics. As a direct consequence of measurement non-invasiveness, the properties of a memoryless classical process on $\Tcal_n$ translate to all `sub-processes' that are probed only at times $\Gamma \subset \Tcal_n$ (see App.~\ref{app:qrf-classical}):
\begin{cor}\label{cor:qrf-classicalmarkovsubstatistics}
All sub-statistics of a memoryless classical dynamics are Markovian and the corresponding conditional probabilities are compatible.
\end{cor}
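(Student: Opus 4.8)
The plan is to exploit the non-invasiveness of classical measurements, Eq.~\eqref{eq:qrf-classicalmarginalisation}: probing the process only at times $\Gamma \subset \Tcal_n$ yields the marginal of $\mathbbm{P}(x_n,\dots,x_1)$ over the outcomes at the omitted times $\overline\Gamma$ (with the do-nothing instrument $\Ical_{\overline\Gamma}$ applied there). I would then show that this marginal is again of the memoryless form of Def.~\ref{def:qrf-classicalmemoryless}, now on the reduced time set $\Gamma$; Markovianity of every sub-statistics follows at once from Obs.~\ref{obs:qrf-classicalmemorylessimpliesmarkov}.

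Concretely, start from Eq.~\eqref{eq:qrf-classicalregressionformula} and marginalise one omitted time at a time. Summing over the outcome $x_k$ at a single omitted time $t_k$ lying strictly between two currently retained times inserts $\sum_{x_k}\ketbra{x_k}{x_k}$, which is the identity, so the two adjacent transitions merge, $S_{k+1:k}\,(\sum_{x_k}\ketbra{x_k}{x_k})\,S_{k:k-1} = S_{k+1:k}S_{k:k-1}$. The point is that $S_{k+1:k}S_{k:k-1}$ is itself a stochastic matrix (the product has non-negative entries and $\sum_{x_{k+1}}\bra{x_{k+1}}S_{k+1:k}S_{k:k-1}\ket{x_{k-1}} = \sum_{x_k}\bra{x_k}S_{k:k-1}\ket{x_{k-1}} = 1$), and that the family of merged matrices stays mutually independent, since each is assembled from a disjoint block of the originally independent $\{S_{j:j-1}\}$. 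Iterating over all of $\overline\Gamma$ --- with omitted times before the first retained time absorbed by propagating $\boldsymbol{p}_1$ forward through the corresponding composed stochastic matrix, and omitted times after the last retained time summing trivially to unity --- produces exactly a memoryless classical dynamics on $\Gamma$, with transition matrices given by the concatenations $S_{b:a} := S_{b:b-1}S_{b-1:b-2}\cdots S_{a+1:a}$ for consecutive times $t_a, t_b \in \Gamma$ (with $a<b$) and initial vector $\boldsymbol{p}_1$ evolved up to $\min\Gamma$.

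Compatibility of the conditional probabilities is then immediate: every sub-statistics, hence every conditional probability read off from it, is a marginal of the \emph{single} joint distribution $\mathbbm{P}(x_n,\dots,x_1)$, so the whole family satisfies Kolmogorov consistency automatically. Equivalently, from the construction above, for any pair $t_a, t_b$ with $a<b$ the transition $\mathbbm{P}(x_b|x_a)$ appearing in \emph{any} sub-statistics in which $t_a$ is the immediate predecessor of $t_b$ equals the \emph{same} matrix element $\bra{x_b}S_{b:a}\ket{x_a}$, regardless of which $\Gamma$ was chosen, and the various initial marginals agree because they are all obtained by evolving the one vector $\boldsymbol{p}_1$.

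The only genuinely load-bearing step is the claim that memorylessness survives marginalisation, i.e., that composing mutually independent stochastic matrices again yields a single stochastic matrix, so that no memory is manufactured by coarse-graining the probed times. This is precisely the feature that separates the classical setting from the quantum one --- where concatenating quantum channels under the analogous marginalisation (inserting the do-nothing instrument) need not collapse to a memoryless model --- and it is the breakdown of this step that the subsequent quantum results will exploit. The remaining pieces, namely tracking the two boundary cases of omitted times and verifying that the merged collection is still \emph{mutually independent}, are routine bookkeeping.
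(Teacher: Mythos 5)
Your proposal is correct and is in substance the same argument as the paper's (App.~\ref{app::SubStatClass}): both exploit classical non-invasiveness to write every sub-statistics as a marginal of the full joint distribution and then observe that the sums over unmeasured outcomes collapse because columns of stochastic matrices sum to unity; your packaging of this as ``memoryless dynamics are closed under coarse-graining of the probed times, with transition matrices given by the compositions $S_{b:a}$'' is just the matrix-level version of the paper's explicit sum-splitting in Eq.~\eqref{eq:qrf-app-classicalsubstatistics}. One caution: your opening claim that compatibility is ``immediate'' from Kolmogorov consistency is not right as stated --- a generic non-Markovian classical process is also Kolmogorov consistent yet violates Eq.~\eqref{eqn::Compatibility} (e.g., $\Pprob(x_3|x_2,x_1)\neq\Pprob(x_3|x_2)$), so consistency of the marginals alone does not give compatibility of the conditionals. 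What actually carries that step is your second argument, namely that every conditional arising with $t_a$ as the most recent measured time equals the fixed matrix element $\bra{x_b}S_{b:a}\ket{x_a}$ independently of which $\Gamma$ was chosen; that is precisely the cancellation of the history-dependent factors in the paper's proof, and it is sufficient.
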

By \emph{compatible}, we mean that all conditional probabilities are independent of how they are obtained, i.e.,
\begin{gather}
\label{eqn::Compatibility}
    \frac{\Pprob(x_j,\boldsymbol{x}_{\Gamma^{(i)}})}{\Pprob(\boldsymbol{x}_{\Gamma^{(i)}})} =  \frac{\Pprob(x_j,\boldsymbol{x}_{\Gamma^{(i)\prime}})}{\Pprob(\boldsymbol{x}_{\Gamma^{(i)\prime}})} =: \Pprob(x_j|x_i), 
\end{gather}
for all $t_j, t_i \in \Tcal_n$ (with $t_j>t_i$) and all subsets $\Gamma^{(i)},\Gamma^{(i)\prime} \subseteq \Tcal_n$ that contain $t_i$ as their largest time. For a classical memoryless dynamics, knowledge of any outcome $x_i$ suffices to erase all historic information (including whether or not a previous measurement was made) and is therefore the only relevant parameter for predicting future outcomes. Such compatibility between Markovian sub-statistics of a memoryless quantum dynamics also holds (for sharp measurements of a fixed observable), although it is less obvious, and we will later employ the breakdown of compatibility as a witness for memory. 

\section{Quantum Dynamics}

In contrast to classical physics, in quantum mechanics, measurements are generally \textit{invasive} such that there is a difference between averaging over outcomes and not performing a measurement. In this article, we focus on the generally considered situation of sharp measurements of an observable, e.g., position in the classical case or spin in the quantum case. This allows us to fairly compare `classical' and `quantum' processes in time. Within this setting, the measurements themselves do not `actively' change the state of the observed system (in the sense that no active interventions are performed), and thus measurement invasiveness only manifests itself in quantum mechanics (due to the loss of coherences in the observed state).

Subsequently, this makes (conditional) probabilities in the quantum realm protocol-dependent entities that require further specification; in what follows, whenever we consider a probability distribution $\Pprob(\boldsymbol{x}_\Gamma)$, we mean the statistics obtained from \textit{only} performing measurements at times in the set $\Gamma \subseteq \Tcal_n$, and doing nothing (denoted by $\Ical_{\overline \Gamma}$) at the remaining times $\overline \Gamma = \Tcal_n \setminus \Gamma$. Importantly, in quantum mechanics---in contrast to Eq.~\eqref{eq:qrf-classicalmarginalisation}---$\Pprob(\boldsymbol{x}_\Gamma):=\mathbbm{P}(\boldsymbol{x}_{\Gamma},\mathcal{I}_{\overline \Gamma}) \neq \sum_{\boldsymbol{x}_{\overline \Gamma}} \mathbbm{P}(x_n, \hdots, x_1)$. Such measurement invasiveness is well-studied and has been used to witness the non-classicality of physical processes~\cite{Milz_2017_Kolmogorov, Smirne_2019_Coherence,Strasberg_2019,Milz_2020}. Despite these added subtleties in the definition of (conditional) probabilities, memoryless quantum dynamics lead---just like in the classical case---to well-defined, compatible Markovian statistics and sub-statistics. To see this, we first generalise Def.~\ref{def:qrf-classicalmemoryless} to the quantum case: 
\begin{defn}\label{def:qrf-memorylessquantum}
A \emph{memoryless quantum dynamics} on $\Tcal_n$ is a set of mutually independent completely positive and trace preserving \textbf{(CPTP)} maps $\{ \Lambda_{j:j-1}\}_{j=2,\hdots,n}$ and an initial state (density operator) $\rho_1$ such that the probability distribution over any sequence of outcomes $x_1,\hdots,x_n$ is given by
\begin{align}\label{eq:qrf-quantumregressionformula} 
    \mathbbm{P}(x_n,\hdots,x_1) = \tr{\Pcal_n^{(x_n)} \Lambda_{n:n-1}  \hdots \Lambda_{2:1} \Pcal_1^{(x_1)} [\rho_1]},
\end{align}
where $\mathcal{P}_j^{(x_j)}[\,\bullet\,] := \ketbra{x_j}{x_j} \bullet \ketbra{x_j}{x_j}$ are maps corresponding to sharp (i.e., rank-1) projective measurements.
\end{defn}
Analogous to the classical case, CPTP maps are the most general state transformations in the presence of environmental noise, and the absence of memory in the dynamics corresponds to the mutual independence of the maps $\Lambda_{j:j-1}$ in the definition. The above equation to compute probabilities is commonly known as the \emph{quantum regression formula} \textbf{(QRF)}~\cite{Lax_1963,Carmichael,Breuer}. Importantly, it allows for the computation of sub-statistics on any $\Gamma \subseteq \Tcal_n$, not via marginalisation, but by replacing the projection operators corresponding to probing times in $\overline \Gamma$ in Eq.~\eqref{eq:qrf-quantumregressionformula} with identity maps. Of course, one need not perform projective measurements, and the above formula can be used to calculate the probability distribution over \textit{any} sequence of outcomes for arbitrary instruments. However, in contrast to the non-invasive measurements typically considered in classical stochastic processes, such general quantum measurements do not necessarily reset the state of the system, which means that memoryless quantum dynamics can lead to non-Markovian statistics for general instruments~\cite{Pollock_2018L,Pollock_2018A,Taranto_2019L,Taranto_2019A}. Nonetheless, when restricted to sharp, projective measurements of a given observable, then---just as in the classical setting---memorylessness in the quantum realm manifests itself on the observational level as Markovianity (see App.~\ref{app:qrf-quantum}): 
\begin{lem}\label{lem:qrf-quantummemorylessimpliesmarkovianity}
    Any memoryless quantum dynamics leads to Markovian statistics (for sharp, projective measurements).
\end{lem}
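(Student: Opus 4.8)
The plan is to exploit the one feature that sets sharp, rank-1 projective measurements apart from generic instruments: such a measurement \emph{resets} the system. Indeed, $\Pcal_{j-1}^{(x_{j-1})}[\sigma] = \Brak{x_{j-1}}{\sigma}{x_{j-1}}\,\ketbra{x_{j-1}}{x_{j-1}}$ for every operator $\sigma$, so immediately after the measurement at $t_{j-1}$ the (sub-normalised) conditional state is proportional to the \emph{fixed} pure state $\ketbra{x_{j-1}}{x_{j-1}}$, with all dependence on the history confined to a single non-negative scalar. Concretely, I would introduce
\[ \widetilde\rho_{j-1}(x_{j-1},\dots,x_1) := \Pcal_{j-1}^{(x_{j-1})}\Lambda_{j-1:j-2}\cdots\Lambda_{2:1}\Pcal_1^{(x_1)}[\rho_1], \]
observe that it equals $q(x_{j-1},\dots,x_1)\,\ketbra{x_{j-1}}{x_{j-1}}$ for some $q\ge 0$, and --- by taking the trace and comparing with Eq.~\eqref{eq:qrf-quantumregressionformula} restricted to the first $j-1$ times (doing nothing afterwards) --- identify $q(x_{j-1},\dots,x_1)=\Pprob(x_{j-1},\dots,x_1)$.

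Second, I would feed this into the QRF for the first $j$ times. Because the transfer map $\Lambda_{j:j-1}$ appearing there is a single, history-independent CPTP map --- this is precisely where the memorylessness hypothesis enters --- linearity of the trace gives $\Pprob(x_j,\dots,x_1)=\tr{\Pcal_j^{(x_j)}\Lambda_{j:j-1}[\widetilde\rho_{j-1}(x_{j-1},\dots,x_1)]}=\Pprob(x_{j-1},\dots,x_1)\,\tr{\Pcal_j^{(x_j)}\Lambda_{j:j-1}[\ketbra{x_{j-1}}{x_{j-1}}]}$. Dividing through by $\Pprob(x_{j-1},\dots,x_1)$ (restricting, as usual, to histories of nonzero probability, so that the conditional is defined) yields $\Pprob(x_j|x_{j-1},\dots,x_1)=\tr{\Pcal_j^{(x_j)}\Lambda_{j:j-1}[\ketbra{x_{j-1}}{x_{j-1}}]}$, which manifestly depends on $x_j$ and $x_{j-1}$ only. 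The identical computation with $\Gamma=\{t_{j-1},t_j\}$ shows this same quantity is $\Pprob(x_j|x_{j-1})$, so Def.~\ref{def::Markovian} holds for every $t_j\in\Tcal_n$. Note that no induction is needed, since only the two joints $\Pprob(x_j,\dots,x_1)$ and $\Pprob(x_{j-1},\dots,x_1)$ enter, and they share the common prefix up to the collapsing projector at $t_{j-1}$.

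There is no genuinely hard step here; the argument is essentially a one-line state-collapse observation. The points that need care are bookkeeping rather than conceptual: (i) conditionals must be taken only on the support of the conditioning events, which is harmless since zero-probability outcomes contribute nothing to the joint distribution; (ii) one should state explicitly that only the times up to $t_j$ feed into $\Pprob(x_j|x_{j-1},\dots,x_1)$, so nothing need be assumed about later measurements (trace preservation of the $\Lambda$'s makes "do nothing afterwards" invisible anyway); and (iii) it is worth flagging exactly where memorylessness is invoked --- namely that $\Lambda_{j:j-1}$ is one fixed map, so the post-collapse evolution cannot re-encode $x_{j-2},\dots,x_1$. (For generic, non-projective instruments the collapse in the first step fails, which is precisely why memoryless quantum dynamics need not yield Markovian statistics in that more general setting.)
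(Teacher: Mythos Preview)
Your proposal is correct and follows essentially the same approach as the paper's proof: both hinge on the observation that a rank-1 projective measurement collapses the (sub-normalised) system state to a scalar multiple of the fixed pure state $\ketbra{x_{j-1}}{x_{j-1}}$, so that the entire history enters only through a prefactor that cancels in the conditional probability. The paper arrives at the same factorisation by expanding $\Lambda_{j:j-1}$ in Kraus form and using cyclicity of the trace, whereas you invoke the collapse identity $\Pcal_{j-1}^{(x_{j-1})}[\sigma]=\Brak{x_{j-1}}{\sigma}{x_{j-1}}\ketbra{x_{j-1}}{x_{j-1}}$ directly; both routes are equivalent and yield the final expression $\bra{x_j}\Lambda_{j:j-1}[\ketbra{x_{j-1}}{x_{j-1}}]\ket{x_j}$, which is then identified with $\Pprob(x_j|x_{j-1})$ by the same do-nothing computation on $\Gamma=\{t_{j-1},t_j\}$.
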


We saw earlier that memoryless classical processes also lead to (compatible) Markovian \emph{sub}-statistics (see Cor.~\ref{cor:qrf-classicalmarkovsubstatistics}), where compatibility is given by Eq.~\eqref{eqn::Compatibility}. This is also true for memoryless quantum processes, with the important difference that sub-statistics are not obtained by marginalisation, but by `doing nothing' at the excluded times, i.e., by explicitly performing the experiment in a different way. Probing sub-statistics in this manner yields meaningful conditional probabilities and we have the following (see App.~\ref{app:qrf-quantum}): 
\begin{lem}\label{lem:qrf-quantummemorylessimpliesmarkoviansubstatistics}
    Any memoryless quantum dynamics leads to Markovian sub-statistics (for sharp, projective measurements) that are compatible.
\end{lem}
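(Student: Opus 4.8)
The plan is to reduce the claim to Lemma~\ref{lem:qrf-quantummemorylessimpliesmarkovianity} together with the single structural fact that a rank-1 projective measurement completely resets the system to a history-independent pure state. First I would fix a subset of probed times $\Gamma = \{t_{i_1}, \dots, t_{i_m}\} \subseteq \Tcal_n$ with $t_{i_1} < \dots < t_{i_m}$ and expand the QRF [Eq.~\eqref{eq:qrf-quantumregressionformula}] with the identity map inserted at every time in $\overline\Gamma = \Tcal_n \setminus \Gamma$. Grouping the consecutive intermediate maps between successive probed times into coarse-grained maps $\widetilde\Lambda_{i_{k+1}:i_k} := \Lambda_{i_{k+1}:i_{k+1}-1}\circ\cdots\circ\Lambda_{i_k+1:i_k}$, and absorbing the maps before $t_{i_1}$ into an effective initial state $\widetilde\rho_{i_1} := \Lambda_{i_1:i_1-1}\circ\cdots\circ\Lambda_{2:1}[\rho_1]$, one sees that $\Pprob(\boldsymbol{x}_\Gamma)$ takes exactly the form of Eq.~\eqref{eq:qrf-quantumregressionformula} on the time set $\Gamma$. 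Since a composition of CPTP maps is CPTP and the $\widetilde\Lambda$'s are built from disjoint blocks of the mutually independent $\Lambda$'s, the sub-process on $\Gamma$ is itself a memoryless quantum dynamics; Markovianity of the sub-statistics then follows immediately by applying Lemma~\ref{lem:qrf-quantummemorylessimpliesmarkovianity} to this reduced process.

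For compatibility, the key step exploits that each $\Pcal_i^{(x_i)}$ is rank-1: for any operator $\sigma$, $\Pcal_i^{(x_i)}[\sigma] = \bra{x_i}\sigma\ket{x_i}\,\ketbra{x_i}{x_i}$, so the post-measurement (subnormalised) state factorises into a nonnegative scalar --- which equals $\Pprob(\boldsymbol{x}_{\Gamma^{(i)}})$ --- times the fixed pure state $\ketbra{x_i}{x_i}$, independently of the earlier outcomes $\boldsymbol{x}_{\Gamma^{(i)}\setminus\{t_i\}}$ and of which earlier times were probed at all. Substituting this into the QRF expansion of $\Pprob(x_j, \boldsymbol{x}_{\Gamma^{(i)}})$, the prefactor cancels upon dividing by $\Pprob(\boldsymbol{x}_{\Gamma^{(i)}})$ and one obtains
\begin{align}
\frac{\Pprob(x_j, \boldsymbol{x}_{\Gamma^{(i)}})}{\Pprob(\boldsymbol{x}_{\Gamma^{(i)}})} = \tr{\Pcal_j^{(x_j)}\,\widetilde\Lambda_{j:i}[\ketbra{x_i}{x_i}]} =: \Pprob(x_j|x_i),
\end{align}
with $\widetilde\Lambda_{j:i} := \Lambda_{j:j-1}\circ\cdots\circ\Lambda_{i+1:i}$. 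The right-hand side manifestly depends only on $x_i$, $x_j$ and the underlying maps, not on $\Gamma^{(i)}$, which is precisely the compatibility condition Eq.~\eqref{eqn::Compatibility}; applied with $\Gamma^{(i)}$ equal to the set of times of $\Gamma$ not later than the probed time immediately preceding $t_j$, it also re-derives Markovianity of the sub-statistics directly.

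I would then tie up the loose ends: conditioning is defined only on events with $\Pprob(\boldsymbol{x}_{\Gamma^{(i)}}) > 0$ (zero-probability histories being excluded by convention), and the degenerate cases $\Gamma^{(i)} = \{t_i\}$ (single-time marginal, with $\widetilde\rho_i$ playing the role of the conditional state) and $i_1 = 1$ (no pre-evolution, $\widetilde\rho_1 = \rho_1$) are covered by the same formulas. I do not expect a genuine obstacle here: the entire content is that a rank-1 projector acts as a history-erasing re-preparation, so the only care required is the bookkeeping of coarse-graining the ``do-nothing'' intervals into bona fide independent CPTP maps and checking that this coarse-graining is consistent across the different subsets $\Gamma^{(i)}$, $\Gamma^{(i)\prime}$ appearing in Eq.~\eqref{eqn::Compatibility}.
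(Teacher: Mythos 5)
Your proposal is correct and rests on the same key mechanism as the paper's proof in App.~\ref{app:qrf-quantum}: the rank-1 projector at $t_i$ factorises the post-measurement state into a history-dependent scalar times the fixed state $\ketbra{x_i}{x_i}$, and that scalar cancels between numerator and denominator of the conditional probability, giving $\Pprob(x_j|\boldsymbol{x}_{\Gamma^{(i)}}) = \bra{x_j}\Lambda_{j:j-1}\cdots\Lambda_{i+1:i}[\ketbra{x_i}{x_i}]\ket{x_j}$ independently of $\Gamma^{(i)}$. Your repackaging of the Markovianity part as a reduction to Lem.~\ref{lem:qrf-quantummemorylessimpliesmarkovianity} via coarse-grained CPTP maps is a mild organisational variant of the paper's direct computation, not a genuinely different argument.
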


In both quantum mechanics and classical physics, memoryless dynamics---when probed sharply in a fixed basis---\emph{always} lead to Markovian statistics and Markovian, compatible sub-statistics. In the classical setting, the converse is also true: From the observation of Markovian statistics one can always construct a (unique) memoryless process describing the situation at hand. As discussed, measuring a fixed observable of a time-evolving quantum system cannot provide enough information to fully determine the underlying dynamics. Nonetheless, it is reasonable to assume that whenever one observes Markovian statistics, there should exist \emph{some} memoryless description that correctly reproduces them (indeed, this is the assumption of employing the QRF to describe Markovian statistics). Thus, we now ask the  question: \emph{Given Markovian statistics (deduced via sharp, projective measurements), does there always exist a memoryless quantum dynamical model that faithfully reproduces them?}  

\section{Hidden Quantum Memory \& Incompatability}

We answer the above question in the negative, first by demonstrating a quantum process that leads to Markovian statistics with non-Markovian sub-statistics, and then by constructing a process with Markovian statistics and sub-statistics that are nonetheless incompatible.

\begin{thm}\label{thm:qrf-hiddennonmarkovianity}
    Given Markovian statistics on $\Tcal_n$ (deduced via sharp, projective measurements), there does \emph{not} always exist a memoryless quantum dynamics that faithfully reproduces them.
\end{thm}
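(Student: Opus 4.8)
The statement is existential in the negative, so the plan is to refute the universal claim by exhibiting one explicit witness. It helps first to pin down what ``faithfully reproduces them'' must mean: since every classical memoryless chain embeds as a quantum memoryless dynamics (take each $\Lambda_{j:j-1}$ to be the dephasing channel whose action on the diagonal is $S_{j:j-1}$), any Markovian $n$-time distribution $\Pprob(x_n,\dots,x_1)$ \emph{on its own} is already of the form Eq.~\eqref{eq:qrf-quantumregressionformula}; hence the assertion can only fail if ``reproduces'' is read as matching \emph{all} of the probing-pattern distributions $\{\Pprob(\boldsymbol{x}_\Gamma)\}_{\Gamma\subseteq\Tcal_n}$ the experimenter can gather. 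With that reading the proof reduces, via Lemma~\ref{lem:qrf-quantummemorylessimpliesmarkoviansubstatistics}, to a concrete target: build a quantum process that, when its fixed observable is probed sharply, yields Markovian statistics on $\Tcal_n$ in the sense of Def.~\ref{def::Markovian} but whose sub-statistics on some $\Gamma \subsetneq \Tcal_n$ are \emph{not} Markovian. Indeed, any memoryless model reproducing all of this process's statistics would in particular reproduce those non-Markovian sub-statistics, contradicting Lemma~\ref{lem:qrf-quantummemorylessimpliesmarkoviansubstatistics}. A sub-statistics can be nontrivially non-Markovian only if it spans at least three times, so the smallest instance requires $n=4$, with the offending sub-statistics obtained by doing nothing at $t_3$.

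Next I would construct the process in the dilated (system--environment) picture: a probed qubit $S$, a small memory register $E$ (a qubit or qutrit suffices), a fixed initial state $\rho_1\otimes\omega_E$, and a sequence of joint unitaries, equivalently $SE$-CPTP maps, $\mathcal{U}_{j:j-1}$ between consecutive times; the recorded distribution for a probing pattern $\Gamma$ is then $\tr{\cdots}$ with $\Pcal_j^{(x_j)}$ inserted at $t_j\in\Gamma$, the identity elsewhere, and $E$ traced out at the end. The point is to route the only channel by which a late outcome could remember an early one through a \emph{coherent} degree of freedom of $S$ that is destroyed precisely by an intervening sharp measurement: the interactions are tuned so that (i) when $S$ is measured at $t_3$ its dephasing severs that channel, collapsing the fully probed statistics to a genuine Markov chain in which each step is governed only by the most recent outcome, but (ii) when the $t_3$ measurement is omitted, the surviving coherence is shuttled into $E$ and fed back into $S$ before $t_4$, making $x_4$ depend on $x_1$ even conditioned on $x_2$. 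A convenient gadget is an alternating pair of controlled operations---one with $S$ controlling an operation on $E$, the next with $E$ controlling an operation on $S$---arranged so that the ``which-branch'' information deposited in $E$ is retrievable at $t_4$ only along the history in which $S$ was left unmeasured at $t_3$.

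Then I would close the argument by direct computation: once the $\mathcal{U}_{j:j-1}$ and $\rho_1\otimes\omega_E$ are fixed, Markovianity of the full statistics is the finite family of identities $\Pprob(x_j\,|\,x_{j-1},\dots,x_1)=\Pprob(x_j\,|\,x_{j-1})$ over all $j$ and all outcome strings, while falsifying Markovianity of the sub-statistics is exhibiting one choice of outcomes with, say, $\Pprob(x_4\,|\,x_2,x_1)\neq\Pprob(x_4\,|\,x_2)$ in the experiment that does nothing at $t_3$; combining this with the reduction above yields Theorem~\ref{thm:qrf-hiddennonmarkovianity}. The same template, pushed one step further so that every sub-statistics is \emph{individually} Markovian but the conditional $\Pprob(x_j|x_i)$ extracted from it depends on the surrounding probing pattern, violates compatibility in the sense of Eq.~\eqref{eqn::Compatibility} and again contradicts Lemma~\ref{lem:qrf-quantummemorylessimpliesmarkoviansubstatistics}, giving the stronger statement advertised above.

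I expect the crux to be the balancing act in the construction rather than any individual verification: Markovianity of the densely probed statistics is a rigid set of constraints that fixes the transition probabilities at every branch point simultaneously, and one must satisfy all of them while retaining a genuinely coherent ``side channel'' through $E$ that the sparse protocol can still detect---too little memory and the sub-statistics stay Markovian, too much and the full statistics cease to be. A secondary point that must be handled with care is consistency of the formalism: for the (non-memoryless) process one writes down, the operational ``do nothing'' has to correspond exactly to the identity map in the process description, so that the sub-statistics appearing in the counterexample are precisely the objects to which Lemma~\ref{lem:qrf-quantummemorylessimpliesmarkoviansubstatistics} applies.
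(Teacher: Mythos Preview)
Your proposal is correct and follows essentially the same route as the paper: reduce via Lemma~\ref{lem:qrf-quantummemorylessimpliesmarkoviansubstatistics} to exhibiting a four-step system--environment circuit whose fully probed statistics are Markovian but whose three-point sub-statistics (with one intermediate measurement omitted) are not, the mechanism being that a sharp intermediate measurement destroys coherence that would otherwise carry memory through the environment. The only cosmetic difference is that the paper omits the measurement at $t_2$ rather than $t_3$; your preliminary remark that ``faithfully reproduces'' must refer to \emph{all} probing patterns (since any single Markovian distribution embeds classically) is a useful clarification the paper leaves implicit.
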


\begin{figure}
    \centering
    \includegraphics[width=0.95\linewidth]{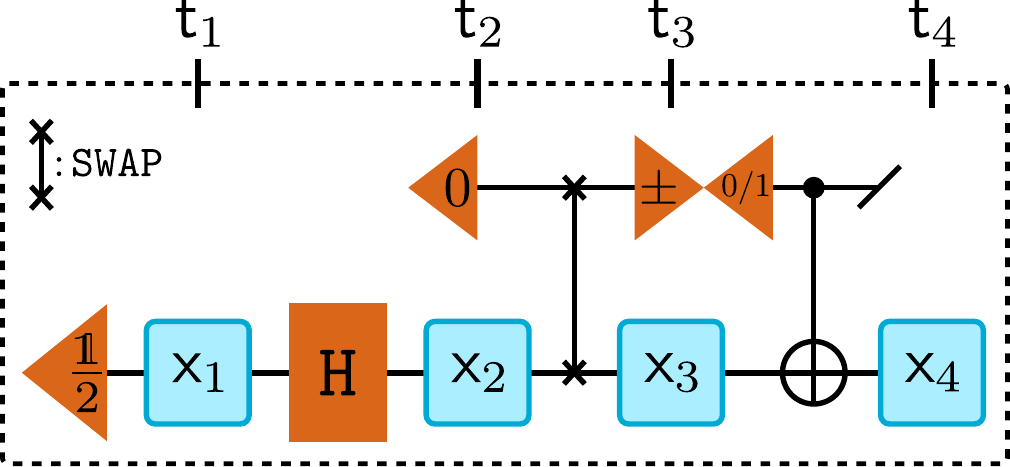}
    \caption{\textit{Markovian statistics that require memory.} When the $\sigma_z$ observable is measured (sharply) at \textit{all} times, the circuit yields Markovian statistics. Memory becomes apparent in the joint statistics $\mathbbm{P}(x_4,x_3,\mathcal{I}_2,x_1)$ when no measurement is performed at $t_2$, which is in contradiction with the possibility of a memoryless dynamical model (Thm.~\ref{thm:qrf-hiddennonmarkovianity}).}
    \label{fig::Circuit}
\end{figure}

\begin{proof}
Our proof is by way of constructing an explicit example, depicted in Fig.~\ref{fig::Circuit}. The dynamics is over four times and the experimenter always measures the $\sigma_z$ observable. An initial state $\rho_1=\tfrac{\mathbbm{1}}{2}$ is sent to the experimenter, who measures it. The dynamics between times $t_1$ and $t_2$ is a Hadamard gate. Following the measurement at $t_2$, the system is swapped with a fiducial environment state  $\tau = \ket{0}$, which is what the experimenter measures at time $t_3$. Meanwhile, the dynamics of the environment consists of a measurement in the $\sigma_x$-basis, followed by a preparation of $\ket{0} (\ket{1})$ whenever $+ (-)$ is recorded. Between times $t_3$ and $t_4$, the dynamics comprises a \verb!CNOT! gate, controlled on the environment. Due to the gates that act on the system \textit{and} environment, this circuit can, in principle, display memory effects for the system dynamics. In App.~\ref{app:qrf-hiddenquantummemory}, we calculate the full statistics $\mathbbm{P}(x_4,x_3,x_2,x_1)$ and show them to be Markovian, i.e., $\Pprob(x_4|x_3,x_2,x_1) = \Pprob(x_4|x_3)$ and $\Pprob(x_3|x_2,x_1) = \Pprob(x_3|x_2)$. This is because the measurement of $\sigma_z$ at $t_2$ yields an output state that is unbiased with respect to the $\sigma_x$-basis measurement on the environment and therefore all memory of $x_1$ is lost. However, by calculating the sub-statistics where the experimenter does \emph{not} measure at time $t_2$, i.e., $\mathbbm{P}(x_4,x_3,\Ical_2,x_1)$, we see that they are \emph{non}-Markovian since information about $x_1$ is now \emph{not} fully scrambled by the `intervention' (or rather lack thereof) at $t_2$, and we have $\mathbbm{P}(x_4|x_3,\Ical_2,x_1) \neq \mathbbm{P}(x_4|x_3)$ with dependence on $x_1$. As we proved in Lem.~\ref{lem:qrf-quantummemorylessimpliesmarkoviansubstatistics}, such behaviour cannot happen for \emph{any} memoryless quantum dynamics. Thus, even though the statistics on $\Tcal_n$ is Markovian, there is \emph{no} memoryless quantum dynamics that faithfully reproduces the statistics on all four times, since the sub-statistics fail to be Markovian.
\end{proof}

Here, we have uncovered a new temporal quantum phenomenon: \emph{Hidden quantum memory}. The fact that full statistics can be Markovian but sub-statistics can be non-Markovian for measurements of a given observable is impossible in the classical realm. Moreover, this property cannot occur for memoryless quantum dynamics either (whenever said observable is measured sharply). Thus, we have shown the existence of Markovian statistics that, not only potentially come from a quantum dynamics with memory (which can happen, as is well known, when measured in a fixed basis), but \emph{fundamentally require} memory for their reproduction.

Another way of viewing this result is that non-Markovian sub-statistics serves as a witness for the necessity of memory in the underlying quantum dynamics. This naturally begs the question: \emph{If the full statistics and all sub-statistics are Markovian, does there always exist a memoryless quantum dynamical model that faithfully reproduces them}? In other words, is the ability to detect non-Markovian sub-statistics a requirement for ruling out a memoryless description of the quantum dynamics? Here, we also answer this in the negative, providing an even stronger result than above:
\begin{thm}\label{thm:qrf-quantummarkovincompatible}
    Given Markovian statistics and sub-statistics on $\Tcal_n$ and all subsets thereof (deduced via sharp, projective measurements), there does not always exist a memoryless quantum process that faithfully reproduces them.
\end{thm}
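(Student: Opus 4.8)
The plan is to use Lem.~\ref{lem:qrf-quantummemorylessimpliesmarkoviansubstatistics} as the contradiction lever: since \emph{every} memoryless quantum dynamics yields Markovian sub-statistics that are also mutually \emph{compatible} in the sense of Eq.~\eqref{eqn::Compatibility}, it suffices to exhibit a single quantum process---probed sharply in a fixed ($\sigma_z$) basis---whose statistics on $\Tcal_n$ and on every subset $\Gamma\subseteq\Tcal_n$ are Markovian (so that the witness of Thm.~\ref{thm:qrf-hiddennonmarkovianity} is unavailable), yet for which there exist times $t_i<t_j$ and two subsets $\Gamma^{(i)},\Gamma^{(i)\prime}$ each having $t_i$ as their largest element with $\Pprob(x_j,\boldsymbol{x}_{\Gamma^{(i)}})/\Pprob(\boldsymbol{x}_{\Gamma^{(i)}})\neq \Pprob(x_j,\boldsymbol{x}_{\Gamma^{(i)\prime}})/\Pprob(\boldsymbol{x}_{\Gamma^{(i)\prime}})$. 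Any such process cannot be reproduced by a memoryless model, establishing the claim. As in Thm.~\ref{thm:qrf-hiddennonmarkovianity}, I would do this by an explicit circuit on a system qubit coupled to a small ancillary environment.

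The construction I would aim for refines the example of Fig.~\ref{fig::Circuit}, the guiding principle being to cleanly separate two kinds of information the environment might retain about the past: (i) \emph{which} outcome was recorded at an earlier time, and (ii) \emph{whether} a measurement was performed there at all. Markovianity of every fixed experiment requires (i) to be erased once a later outcome is recorded; as before, this is arranged by making each post-measurement system state mutually unbiased with respect to the environment's internal measurement basis, so that conditioning on any observed outcome destroys all earlier outcome-dependence. Incompatibility instead requires (ii) to survive: the environment should carry a classical ``flag'' recording whether the experimenter intervened at the relevant time---set, e.g., by a controlled system--environment gate whose action differs according to the coherence that a genuine $\sigma_z$ measurement (as opposed to a do-nothing slot) destroys---and this flag should only \emph{bias} a later $\sigma_z$ reading, never \emph{correlate} with its outcome. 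Because the flag takes the same value across all branches of a given experiment (there either was, or was not, an intervention at $t_i$) it does not correlate with any recorded outcome and hence preserves Markovianity of that experiment; but its distribution differs between the experiment that probes $t_i$ and one that does not, which shifts some conditional $\Pprob(x_j|x_i)$ and breaks Eq.~\eqref{eqn::Compatibility}. I expect a small number of times---the four of Thm.~\ref{thm:qrf-hiddennonmarkovianity}, possibly one extra time or a slightly enlarged environment---with a SWAP-based relay of the environment into the system for readout, interleaved with the flag-setting gate, to suffice.

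The bulk of the work, and the main obstacle, is verification: one must contract the circuit with identity maps inserted at the unprobed times to obtain $\Pprob(\boldsymbol{x}_\Gamma)$ for \emph{all} $2^n$ choices of $\Gamma$, check that each is Markovian in the sense of Def.~\ref{def::Markovian}, and simultaneously pin down two subsets witnessing the incompatibility of Eq.~\eqref{eqn::Compatibility}. The tension here is genuine: the flag mechanism must leak the measure/do-nothing distinction strongly enough to change some cross-experiment conditional, yet it must leak no outcome-to-outcome dependence \emph{within} any single experiment, for otherwise some sub-statistics becomes non-Markovian and the result collapses to the weaker Thm.~\ref{thm:qrf-hiddennonmarkovianity}. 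I would resolve this by choosing the environment's internal measurement and re-preparation so that every surviving correlation is of the ``flag'' type---affecting only marginals and branch normalisations, never the branching ratios conditioned on an observed outcome. Once a circuit passing all of these checks is exhibited, the conclusion is immediate from Lem.~\ref{lem:qrf-quantummemorylessimpliesmarkoviansubstatistics}.
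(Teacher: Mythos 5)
Your strategy is exactly the paper's: reduce to Lemma~\ref{lem:qrf-quantummemorylessimpliesmarkoviansubstatistics} by exhibiting a circuit whose environment retains only a record of \emph{whether} earlier measurements were performed (never \emph{which} outcomes occurred), so that every (sub-)statistics is Markovian while the conditional probabilities at $t_4$ depend on the presence or absence of the $t_2$ intervention and hence violate Eq.~\eqref{eqn::Compatibility}. The ``flag'' mechanism you describe is precisely what the circuit of Fig.~\ref{fig::Circuit2} implements (the environment at $t_3$ is $\propto\mathbbm{1}$ or $\propto\ketbra{0}{0}$ according to which measurements were made, independent of their outcomes, while the system is discarded and reprepared in $\ket{0}$); all that remains is to write down the explicit gates and carry out the bookkeeping done in App.~\ref{app:qrf-hiddenquantummemory}.
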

\begin{proof}
The proof is again by constructing an explicit example, with the corresponding circuit depicted in Fig.~\ref{fig::Circuit2}. In App.~\ref{app:qrf-hiddenquantummemory}, we calculate the full statistics $\mathbbm{P}(x_4,x_3,x_2,x_1)$ and all relevant sub-statistics [e.g., $\mathbbm{P}(x_4,x_3,\Ical_2,x_1)$, etc.] and show them to be Markovian. This latter fact can easily be seen directly: Since the state of the system is discarded and reprepared in a fixed state $\ket{0}$ between times $t_2$ and $t_3$, the only way in which memory from $t_1$ and/or $t_2$ can influence the statistics observed at $t_4$---thus potentially rendering the conditional probabilities $\Pprob(x_4|x_3, x_2, x_1)$, $\Pprob(x_4|x_3, \Ical_2, x_1)$ and $\Pprob(x_4|x_3, x_2, \Ical_1)$ non-Markovian---is via the state of the environment at time $t_3$. However, while this state explicitly depends on \textit{whether} previous measurements were performed, it crucially does \textit{not} depend on the respective measurement outcomes. For example, if measurements at $t_1$ \textit{and} $t_2$ are performed, then the state of the environment at $t_3$ is proportional to the maximally mixed state, independent of the respective measurement outcomes. On the other hand, if only a measurement at $t_1$ is performed, then the state of the environment at $t_3$ is (proportional to) $\ketbra{0}{0}$, again independent of the measurement outcome at $t_1$. The same independence of previous outcomes (but not of whether or not the respective measurements were performed) holds true for all other potential combinations of performed and unperformed measurements (as we show explicitly in App.~\ref{app:qrf-hiddenquantummemory}), such that \textit{all} conditional probabilities observed at times $t_4$ and $t_3$ are indeed Markovian. However, as can already be seen from the above discussion, they are not compatible. Indeed, since the state of the environment at time $t_3$ depends upon whether or not a measurement was performed at $t_2$, the resulting conditional probabilities at $t_4$ differ depending upon whether or not previous measurements were performed, making them incompatible.
\end{proof}

\begin{figure}
    \centering
    \includegraphics[width=0.95\linewidth]{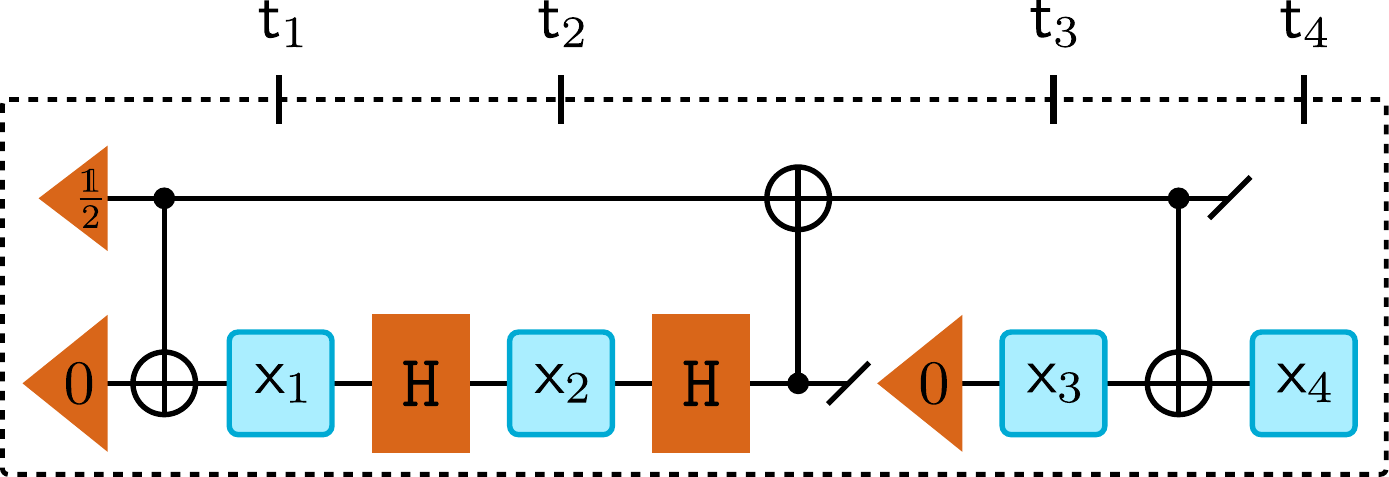}
    \caption{\textit{Incompatible Markovian statistics.} When the $\sigma_z$ observable is measured on the system at all subsets of times, the circuit yields Markovian statistics and sub-statistics. Despite this Markovianity, the respective conditional probabilities are incompatible, i.e., they depend on whether or not previous measurements were performed (Thm.~\ref{thm:qrf-quantummarkovincompatible}). Such behaviour is \emph{only} possible in the presence of memory.}
    \label{fig::Circuit2}
\end{figure}

\section{Conclusions}

In this article, we have presented the concept of hidden quantum memory, i.e., the existence of processes that yield Markovian statistics (for sharp measurements of an observable) that \emph{cannot} be explained without underlying memory. In a similar vein to the violation of Leggett-Garg inequalities, this phenomenon can only occur when the performed measurements are invasive, since otherwise the observed statistics are classical and hence Markovianity and memorylessness coincide. However, hidden quantum memory is not merely a different manifestation of measurement invasiveness, but arises due to its interplay with memory effects; while memoryless quantum dynamics can violate Leggett-Garg inequalities, they cannot propagate information about whether or not measurements were performed at earlier times other than the most recent---i.e., they cannot exhibit hidden quantum memory.

It is important to stress the assumptions that underpin our observations, in particular with respect to the condition that the measurements of the observable are sharp. Indeed, allowing for classical stochastic processes to be probed via \textit{active} interventions~\cite{Pearl,Costa2016} or noisy measurements~\cite{Siefert2003,Bottcher2006,Kleinhans2007,Lehle2011} can also lead to seemingly `non-classical' effects such as the violation of Leggett-Garg inequalities~\cite{Leggett_1985,Leggett_2008,Emary_2014}, breakdown of Kolmogorov consistency~\cite{Milz_2017_Kolmogorov}, or the device-dependence of memory~\cite{Taranto_2019L,Taranto_2019A}, and even the notion of Markovianity itself becomes either obfuscated or trivial~\cite{Capela2022}. In fact, with such interventions the same statistics as generated by the dynamics of Figs. (2) and (3) can be reproduced with classical dynamics (with memory). However, in the classical case, allowing for active manipulations of the state to permit such incompatible statistics is rather ad hoc---essentially requiring the observer to explicitly communicate to the environment whether or not a measurement is made---and not intrinsic to the fundamental properties of measurements themselves. This is in contrast to the situation in quantum mechanics, where measurements fundamentally disturb the system state in general. By only considering sharp measurements of an observable, we restrict our focus to the properties of sequential measurements per se and fairly compare the two theories at the expense of rendering our results device-dependent (which is necessary for any meaningful distinction between classical and quantum temporal effects~\cite{Budroni_2019}). Within this paradigm, then, activation of hidden memory can only be seen for quantum dynamics, making the phenomena we uncover genuinely quantum effects.

Importantly, our results differ from the (known) fact that probing a quantum processes with memory in a fixed basis---i.e., sharply measuring a fixed observable---can yield Markovian statistics. For the Markovian statistics that we reported, there exists \textit{no} memoryless model that reproduces them, either because they become non-Markovian when measurements are not performed at some times, or because all observed statistics and sub-statistics are Markovian but incompatible. In turn, this implies that even if one observes Markovian statistics in a given basis, one cannot confidently employ a QRF to compute the statistics in said basis. As a consequence, even detecting the possibility of a memoryless description of a process is an experimentally complex undertaking that not only requires one to deduce joint probabilities on $\Tcal_n$, but also on all subsets thereof. Naturally, one might expect that simultaneously demanding Markovianity \textit{and} compatibility of all observed sub-statistics should suffice to guarantee a memoryless description. However, even under such strong requirements, the existence of a memoryless model is a priori not clear, and investigations into this question are subject to future work. 

Together, our results expose a novel genuinely quantum effect in time and demonstrate the richness of effects that arise from the intricate interplay of measurement invasiveness, memory, and the freedom to choose different instruments that quantum mechanics affords.

\vspace{-0.5em}
\begin{acknowledgments}
We would like to thank Dario Egloff, Andrea Smirne, Kavan Modi, and Top Notoh for interesting discussions. P.T. acknowledges funding from the Japan Society for the Promotion of Science (JSPS) Postdoctoral Fellowships for Research in Japan (Short-term Program) and the KAKENHI grant No. 21H03394, Austrian Science Fund (FWF) project: Y879-N27 (START), and the European Research Council (Consolidator grant `Cocoquest' 101043705). T.J.E. is supported by the University of Manchester Dame Kathleen Ollerenshaw Fellowship. S.M. acknowledges funding from the Austrian Science Fund (FWF): ZK3 (Zukunftkolleg) and Y879-N27 (START project), the European Union's Horizon 2020 research and innovation programme under the Marie Sk{\l}odowska Curie grant agreement No. 801110, and the Austrian Federal Ministry of Education, Science and Research (BMBWF). The opinions expressed in this publication are those of the authors, the EU Agency is not responsible for any use that may be made of the information it contains. This project/research was supported by grant number FQXi-RFP-IPW-1910 from the Foundational Questions Institute and Fetzer Franklin Fund, a donor advised fund of Silicon Valley Community Foundation.
\end{acknowledgments}

\vspace{-0.5em}
\def\bibsection{\section*{References}} 

%

\onecolumn\newpage
\appendix

\section{Classical Dynamics}\label{app:qrf-classical}

\subsection{Memoryless Classical Dynamics and Markovian Statistics}
Here, we prove Obs.~\ref{obs:qrf-classicalmemorylessimpliesmarkov} of the main text: \\ 

\noindent \textbf{Observation~\ref{obs:qrf-classicalmemorylessimpliesmarkov}.} \textit{In the classical setting, memoryless dynamics are equivalent to Markovian statistics.} \\ 

Naturally, this equivalence is well-known, but its explicit discussion exposes many of the subtleties with respect to marginalisation that play a crucial role in the quantum case. For the proof, in the forwards direction, beginning with Eq.~\eqref{eq:qrf-classicalregressionformula}, we have
\begin{align}\label{eq:qrfapp-classicalregressionformula}
    \mathbbm{P}(x_n,\hdots,x_1) = \bra{x_n} S_{n:n-1} \ketbra{x_{n-1}}{x_{n-1}} \hdots \ketbra{x_2}{x_2}S_{2:1}\ketbra{x_1}{x_1}\boldsymbol{p}
_1\, , 
\end{align}
where $\{S_{j:j-1}\}$ and $\boldsymbol{p}_1$ are, respectively, the stochastic matrices and the initial probability vector that define the memoryless classical dynamics. Computing the conditional probability for an arbitrary time $t_j$ given the entire sequence of historic outcomes up until that time explicitly gives
\begin{align}
    \mathbbm{P}(x_j|x_{j-1},\hdots,x_1) &= \frac{\bra{x_j} S_{j:j-1} \ketbra{x_{j-1}}{x_{j-1}} S_{j-1:j-2} \ketbra{x_{j-2}}{x_{j-2}} \hdots \ketbra{x_2}{x_2}S_{2:1}\ketbra{x_1}{x_1}\boldsymbol{p_1}}{\bra{x_{j-1}} S_{j-1:j-2} \ketbra{x_{j-2}}{x_{j-2}} \hdots \ketbra{x_2}{x_2}S_{2:1}\ketbra{x_1}{x_1} \boldsymbol{p_1}} \notag \\
    &= \bra{x_j} S_{j:j-1} \ket{x_{j-1}} \quad \quad \forall \, x_{j-2},\hdots,x_1.
\end{align}
This expression is independent of all $x_1,\hdots,x_{j-2}$ and it is indeed equivalent to the 
the conditional probability $\mathbbm{P}(x_j|x_{j-1})$ an experimenter would observe when only making measurements at $t_{j-1}$ and $t_j$, i.e., they do not measure (which we denote below by $\mathcal{I}_{j-2:1}$). Unlike in quantum mechanics, this conditional probability can equivalently be expressed by marginalising the full joint probability distribution [see Eq.~\eqref{eq:qrf-classicalmarginalisation}] as follows
\begin{align}
    \mathbbm{P}(x_j|x_{j-1}) &= \frac{\mathbbm{P}(x_j,x_{j-1},\mathcal{I}_{j-2:1})}{\mathbbm{P}(x_{j-1},\mathcal{I}_{j-2:1})} \notag \\ 
    &=\frac{\sum_{x_{j-2},\hdots,x_1}\mathbbm{P}(x_j,x_{j-1},x_{j-2},\hdots,x_1)}{\sum_{x_{j-2},\hdots,x_1}\mathbbm{P}(x_{j-1},x_{j-2},\hdots,x_1)} \notag \\
    &= \frac{\sum_{x_{j-2},\hdots,x_1}\bra{x_j} S_{j:j-1} \ketbra{x_{j-1}}{x_{j-1}} S_{j-1:j-2} \ketbra{x_{j-2}}{x_{j-2}} \hdots \ketbra{x_2}{x_2}S_{2:1}\ketbra{x_1}{x_1}p_1\ket{x_0}}{\sum_{x_{j-2},\hdots,x_1}\bra{x_{j-1}} S_{j-1:j-2} \ketbra{x_{j-2}}{x_{j-2}} \hdots \ketbra{x_2}{x_2}S_{2:1}\ketbra{x_1}{x_1}p_1\ket{x_0}} \notag \\
    &= \bra{x_j} S_{j:j-1} \ket{x_{j-1}}.
\end{align}
Thus we have that for a memoryless classical dynamics, the conditional probabilities $\mathbbm{P}(x_j|x_{j-1},\hdots,x_1)$ and $\mathbbm{P}(x_j|x_{j-1})$ coincide (and both amount to $\bra{x_j} S_{j:j-1} \ket{x_{j-1}}$), leading to Markovianity of the statistics and consequently the decomposition of the joint probability distribution expressed in Eq.~\eqref{eq:qrf-classicalmarkov}.

Conversely, any Markovian statistics can be faithfully reproduced via a memoryless classical model: Given a joint probability distribution over measurement outcomes for a classical stochastic process, one can always write
\begin{align}\label{eq:qrf-classicaldynamicaljoint}
    \mathbbm{P}(x_n,\hdots, x_1) =\mathbbm{P}(x_n|x_{n-1},\hdots,x_1) \mathbbm{P}(x_{n-1}|x_{n-2},\hdots,x_1) \, \hdots \,
    \mathbbm{P}(x_2|x_1) \, \mathbbm{P}(x_1).
\end{align}
Equation~\eqref{eq:qrf-classicaldynamicaljoint} holds true for any probability distribution (by definition of conditional probabilities), with the decomposition on the r.h.s. encoding potential memory effects. For Markovian statistics, the above expression simplifies to Eq.~\eqref{eq:qrf-classicalmarkov}. Then, one can simply define a set of matrices $\{S_{j:j-1}\}$ via
\begin{align}
    \bra{x_j} S_{j:j-1} \ket{x_{j-1}} := \mathbbm{P}(x_j|x_{j-1}).
\end{align}
These matrices are stochastic (as they contain only non-negative entries and each of the columns to unity since $\sum_{x_j} \mathbbm{P}(x_j|x_{j-1}) = 1 \, \forall \, x_{j-1}$). One can also define the initial state via $\bra{x_1}\boldsymbol{p}_1 := \Pprob(x_1)$. From these objects, one can reproduce the joint statistics faithfully via the memoryless dynamical model expressed in Eq.~\eqref{eq:qrf-classicalregressionformula}. 

\subsection{Sub-Statistics of Memoryless Classical Dynamics}
\label{app::SubStatClass}

Here, we prove Cor.~\ref{cor:qrf-classicalmarkovsubstatistics} of the main text: \\ 

{\noindent\it{\bf Corollary~\ref{cor:qrf-classicalmarkovsubstatistics}.}
All sub-statistics of a memoryless classical dynamics are Markovian and the corresponding conditional probabilities are compatible.} \\ 

For the proof, consider a memoryless dynamics on $\Tcal_n = \{t_1, \hdots, t_n\}$ and an arbitrary sub-statistics where the experimenter measures at time $t_j$ and any subset of earlier times $\Gamma^{(i)}$ (with corresponding sequence of outcomes $\boldsymbol{x}_{\Gamma^{(i)}}$), where $t_j>t_i = \max(\Gamma^{(i)})$. Below, we assume both that $\Gamma^{(i)}$ does not `skip times' (e.g., it can be of the form $\{t_3, t_2, t_1\}$, but not $\{t_3,t_1\}$) and that $\min(\Gamma^{(i)}) = t_1$. These assumptions are not crucial and do not affect the generality of the results, but significantly simplify notation. We denote by $M$ the set of all times between $t_i$ and $t_j$ and by $F$ that of all times after $t_j$, with corresponding outcome sequences $\boldsymbol{x}_M$ and $\boldsymbol{x}_F$ and do-nothing operations $\Ical_M$ and $\Ical_F$, respectively. For convenience, we also introduce the do-nothing operation $\Ical_{FjM}$ for all times after $t_i$.
With this, we explicitly calculate the probability of observing $x_j$ conditioned on the sequence of previous outcomes $\boldsymbol{x}_{\Gamma^{(i)}}$ as
\begin{align}\label{eq:qrf-app-classicalsubstatistics}
    \mathbbm{P}(x_j|\boldsymbol{x}_{\Gamma^{(i)}}) &= \frac{\mathbbm{P}(\mathcal{I}_{F}, x_j, \mathcal{I}_M, x_i, \hdots , x_1)}{\mathbbm{P}(\mathcal{I}_{FjM},x_{i},\hdots, x_1)} \notag \\
    &= \frac{\sum_{\boldsymbol{x}_F \boldsymbol{x}_M} \mathbbm{P}(x_n,\hdots,x_1)}{\sum_{\boldsymbol{x}_F x_j \boldsymbol{x}_M} \mathbbm{P}(x_n,\hdots,x_1)} \notag \\
    &= \frac{\sum_{\boldsymbol{x}_F \boldsymbol{x}_M} \mathbbm{P}(x_n|x_{n-1})\hdots\mathbbm{P}(x_2|x_1)}{\sum_{\boldsymbol{x}_F x_j \boldsymbol{x}_M}\mathbbm{P}(x_n|x_{n-1})\hdots\mathbbm{P}(x_2|x_1)} \notag \\
    &= \frac{\left[\sum_{\boldsymbol{x}_F} \mathbbm{P}(x_n|x_{n-1})\hdots \mathbbm{P}(x_{j+1}|x_j)\right] \left[ \sum_{\boldsymbol{x}_M} \mathbbm{P}(x_j|x_{j-1}) \hdots \mathbbm{P}(x_{i+1}|x_{i}) \right] \left\{\mathbbm{P}(x_{i}|x_{i-1}) \hdots  \mathbbm{P}(x_2|x_1)\right\}}{\left[\sum_{\boldsymbol{x}_F x_j \boldsymbol{x}_M}\mathbbm{P}(x_n|x_{n-1})\hdots\mathbbm{P}(x_{i+1}|x_{i})\right] \left\{\mathbbm{P}(x_i|x_{i-1}) \hdots \mathbbm{P}(x_2|x_1)\right\}} \notag \\
    &= \sum_{\boldsymbol{x}_M} \mathbbm{P}(x_j|x_{j-1}) \hdots  \mathbbm{P}(x_{i+1}|x_i) =: \mathbbm{P}(x_j|x_i),
\end{align}
where in the second line we employed the marginalisation rule to compute the sub-statistics from the full process on $\Tcal_n$, in the third line we invoked the Markovianity condition (on the full statistics), in the fourth line we split the sums into independent parts, in the fifth line we used the fact that the first sum in the numerator and the sum in the denominator both evaluate to unity, and the final line only depends on $x_j$ and $x_i$ and satisfies the properties of a conditional probability distribution. Thus we see that any sub-statistics of a memoryless classical dynamics are also Markovian, i.e., $\mathbbm{P}(x_j|\boldsymbol{x}_{\Gamma^{(i)}}) = \mathbbm{P}(x_j|x_i)$ for all $t_j>t_i$. As mentioned, this reasoning also holds for more `complicated' sets $\Gamma^{(i)}$, albeit with a slightly more cumbersome notation than used in the proof above.

Moreover, the Markovian sub-statistics are compatible in the sense that it does not matter what occurred at any time prior to that of the most recent conditioning argument, i.e., $t_i$. For instance, if one computes $\mathbbm{P}(x_j|x_i, \Ical_{i-1:\ell+1},x_\ell,\Ical_{\ell-1:1})$, this should also be independent of $x_\ell$ (i.e., Markovian sub-statistics) \emph{and} equal to $\mathbbm{P}(x_j|\boldsymbol{x}_{\Gamma^{(i)}}) = \Pprob(x_j|x_i)$ computed above (i.e., compatible). This can be seen by noting that for any historic sequence (of either measuring or not at any times $t_1,\hdots,t_{i-1}$, which we denote with $(x\cup\Ical)_{i-1:1}$), the logic of Eq.~\eqref{eq:qrf-app-classicalsubstatistics} holds, since the only changes would appear in the terms in curly parentheses in the fourth line, which always cancel. Hence, we have the compatibility $\mathbbm{P}(x_j|x_{i},(x\cup\Ical)_{i-1:1}) = \mathbbm{P}(x_j|x_{i})$ for all possible combinations of measuring or not in the history leading up to time $t_i$. Again, this argument can be run in exactly the same vein for any two subsets of times $\Gamma^{(i)}$ and $\Gamma^{(i)\prime}$ satisfying $\max(\Gamma^{(i)}) = \max(\Gamma^{(i)\prime}) = t_i$, with the result that $\Pprob(x_j|\boldsymbol{x}_{\Gamma^{(i)}}) = \Pprob(x_j|\boldsymbol{x}_{\Gamma^{(i)\prime}}) = \Pprob(x_j|x_i)$ for all $t_j >t_i$.

\section{Quantum Dynamics}\label{app:qrf-quantum}

\subsection{Memoryless Quantum Dynamics and Markovian Statistics}

Here, we prove Lem.~\ref{lem:qrf-quantummemorylessimpliesmarkovianity} of the main text: \\ 

{\noindent\it{\bf Lemma~\ref{lem:qrf-quantummemorylessimpliesmarkovianity}.}
Any memoryless quantum dynamics leads to Markovian statistics (for sharp, projective measurements).} \\ 

Beginning with Eq.~\eqref{eq:qrf-quantumregressionformula}, we have that for any memoryless quantum dynamics
\begin{align}\label{eq:qrf-app-quantumregressionformula} 
    \mathbbm{P}(x_n,\hdots,x_1) = \tr{\Pcal_n^{(x_n)} \Lambda_{n:n-1}  \hdots \Lambda_{2:1} \Pcal_1^{(x_1)} \rho_1} \, ,
\end{align}
where $\{\Lambda_{j:j-1}\}$ are mutually independent CPTP maps, $\rho_1$ is an initial quantum state, and $\Pcal_j^{(x_j)}[\,\bullet\,] = \ketbra{x_j}{x_j}\bullet \ketbra{x_j}{x_j}$. The statistics up to any time $t_j$ is given by $\Pprob(\Ical_{n:j+1}, x_j,\hdots, x_1) =: \Pprob(x_j, \hdots, x_1) = \tr{\Pcal_j^{(x_j)} \Lambda_{j:j-1}  \hdots \Lambda_{2:1} \Pcal_1^{(x_1)} \rho_1}$ (this can be seen either by direct computation or by invoking causality), where $\Ical_{n:j+1}$ denotes `do-nothing' operations from $t_j$ to $t_n$. With this, computing the conditional probability for an arbitrary time $t_j$ given the entire sequence of historic outcomes up until that time explicitly gives
\begin{align}\label{eq:qrf-app-quantumfullsequencemarkov}
    \mathbbm{P}(x_j|x_{j-1},\hdots,x_1) &= \frac{\tr{\Pcal_j^{(x_j)} \Lambda_{j:j-1}  \hdots \Lambda_{2:1} \Pcal_1^{(x_1)} \rho_1}}{\tr{\Pcal_{j-1}^{(x_{j-1})} \Lambda_{j-1:j-2}  \hdots \Lambda_{2:1} \Pcal_1^{(x_1)} \rho_1}} \notag \\
    &= \frac{\mathrm{tr}\left[\sum_{\ell} \ketbra{x_j}{x_j} L_{j:j-1}^{\ell} \ketbra{x_{j-1}}{x_{j-1}} \left( \Lambda_{j-1:j-2}  \hdots \Lambda_{2:1} \Pcal_1^{(x_1)} \rho_1 \right) \ketbra{x_{j-1}}{x_{j-1}} L_{j:j-1}^{\ell \dagger} \ketbra{x_j}{x_j}\right]}{\tr{\ketbra{x_{j-1}}{x_{j-1}} \left( \Lambda_{j-1:j-2}  \hdots \Lambda_{2:1} \Pcal_1^{(x_1)} \rho_1\right) \ketbra{x_{j-1}}{x_{j-1}}}} \notag \\
    &= \frac{\sum_\ell \bra{x_j} L_{j:j-1}^\ell \ketbra{x_{j-1}}{x_{j-1}} L_{j:j-1}^{\ell \dagger} \ket{x_j} \bra{x_{j-1}} \left( \Lambda_{j-1:j-2}  \hdots \Lambda_{2:1} \Pcal_1^{(x_1)} \rho_1 \right) \ket{x_{j-1}}}{\bra{x_{j-1}} \left( \Lambda_{j-1:j-2}  \hdots \Lambda_{2:1} \Pcal_1^{(x_1)} \rho_1 \right) \ket{x_{j-1}}} \notag \\
    &= \sum_\ell \bra{x_j} L_{j:j-1}^\ell \ketbra{x_{j-1}}{x_{j-1}} L_{j:j-1}^{\ell \dagger} \ket{x_j} \notag \\
    &= \bra{x_j} \Lambda_{j:j-1} [\ketbra{x_{j-1}}{x_{j-1}}] \ket{x_j},
\end{align}
where we wrote $\Lambda_{j:j-1}[\,\bullet\,] := \sum_\ell L_{j:j-1}^\ell \bullet L_{j:j-1}^{\ell \dagger}$ in Kraus operator form in the second line, and then made use of the cyclicity of the trace and the fact that the measurements are sharp (rank-$1$) projectors in the third line (importantly, if the projectors are not rank-$1$, corresponding, e.g., to the measurement of an observable with degeneracies, then memoryless processes do \textit{not} necessarily lead to Markovian statistics~\cite{TarantoThesis, Taranto_2019A}, since in this case the state after the measurement is not fully determined by the outcome; this fact is also true in the classical setting). This expression is independent of all $x_1,\hdots,x_{j-2}$ and therefore the conditional probabilities are Markovian. We now show that it is indeed equivalent to the conditional probability $\mathbbm{P}(x_j|x_{j-1},\Ical_{j-2:1}) =: \mathbbm{P}(x_j|x_{j-1})$, where the experimenter does \textit{not} measure at all on times $t_1,\hdots,t_{j-2}$. Explicitly, we have
\begin{align}\label{eq:qrf-app-quantumconsecutivemarkov}
    \mathbbm{P}(x_j|x_{j-1}) &= \frac{\tr{\Pcal_j^{(x_j)} \Lambda_{j:j-1} \Pcal_{j-1}^{(x_{j-1})} \Lambda_{j-1:j-2} \Ical_{j-2} \hdots \Lambda_{2:1} \Ical_1 \rho_1}}{\tr{\Pcal_{j-1}^{(x_{j-1})} \Lambda_{j-1:j-2} \Ical_{j-2} \hdots \Lambda_{2:1} \Ical_{1} \rho_1}} \notag \\
    &= \frac{\tr{\sum_\ell \ketbra{x_j}{x_j} L_{j:j-1}^\ell \ketbra{x_{j-1}}{x_{j-1}} \left( \Lambda_{j-1:j-2} \Ical_{j-2} \hdots \Lambda_{2:1} \Ical_1 \rho_1 \right) \ketbra{x_{j-1}}{x_{j-1}} L_{j:j-1}^{\ell \dagger} \ketbra{x_j}{x_j}} }{\tr{\ketbra{x_{j-1}}{x_{j-1}} \left( \Lambda_{j-1:j-2} \Ical_{j-2} \hdots \Lambda_{2:1} \Ical_1 \rho_1 \right) \ketbra{x_{j-1}}{x_{j-1}}}} \notag \\
    &= \frac{\sum_\ell \bra{x_j} L_{j:j-1}^\ell \ketbra{x_{j-1}}{x_{j-1}} L_{j:j-1}^{\ell \dagger} \ket{x_j} \bra{x_{j-1}} \left( \Lambda_{j-1:j-2} \Ical_{j-2} \hdots \Lambda_{2:1} \Ical_1 \rho_1 \right) \ket{x_{j-1}}}{\bra{x_{j-1}} \left( \Lambda_{j-1:j-2} \Ical_{j-2} \hdots \Lambda_{2:1} \Ical_1 \rho_1 \right) \ket{x_{j-1}}} \notag \\
    &= \sum_\ell \bra{x_j} L_{j:j-1}^\ell \ketbra{x_{j-1}}{x_{j-1}} L_{j:j-1}^{\ell \dagger} \ket{x_j} \notag \\
    &= \bra{x_j} \Lambda_{j:j-1} [\ketbra{x_{j-1}}{x_{j-1}}] \ket{x_j}.
\end{align}
Thus, we see that the conditional statistics in both situations above coincide and are indeed Markovian $\mathbbm{P}(x_j|x_{j-1},\hdots,x_1) = \mathbbm{P}(x_j|x_{j-1},\Ical_{j-2:1}) = \mathbbm{P}(x_j|x_{j-1}) = \bra{x_j} \Lambda_{j:j-1} [\ketbra{x_{j-1}}{x_{j-1}}] \ket{x_j}$.


\subsection{Sub-Statistics of Memoryless Quantum Dynamics}
Here, we prove Lem.~\ref{lem:qrf-quantummemorylessimpliesmarkoviansubstatistics} from the main text:
\\ \\
{\noindent\it{\bf Lemma~\ref{lem:qrf-quantummemorylessimpliesmarkoviansubstatistics}.} Any memoryless quantum dynamics leads to Markovian sub-statistics (for sharp, projective measurements) that are compatible.}
\\ \\
Similar to App.~\ref{app::SubStatClass}, we will restrict the discussion again to subsets of $\Tcal_n$ of the form $\Gamma^{(i)} = \{t_1, \hdots, t_i\}$ and show that $\Pprob(x_j|\boldsymbol{x}_{\Gamma^{(i)}}) = \Pprob(x_j|x_i)$ holds for all $t_j>t_i$ and $t_i = \max(\Gamma^{(i)})$. Let $\Ical_{j-1:i+1}$ denote the `do-nothing' operation at all times between $t_i$ and $t_j$. With this, we obtain
\begin{align}
    \mathbbm{P}(x_j|\Ical_{j-1:i+1},\boldsymbol{x}_{\Gamma^{(i)}}) &= \frac{\tr{\Pcal_j^{(x_j)} \Lambda_{j:j-1} \Ical_{j-1} \Lambda_{j-1:j-2} \hdots \Ical_{i+1} \Lambda_{i+1:i} \Pcal_i^{(x_i)} \Lambda_{i:i-1} \mathcal{P}_{i-1}^{(x_{i-1})} \hdots \Pcal_1^{(x_1)} \rho_1}}{\tr{\Pcal_i^{(x_i)} \Lambda_{i:i-1} \mathcal{P}_{i-1}^{(x_{i-1})} \hdots \Pcal_1^{(x_1)} \rho_1}} \notag \\
    &= \frac{\bra{x_j} \Lambda_{j:j-1} \Ical_{j-1} \Lambda_{j-1:j-2} \hdots \Ical_{i+1} \Lambda_{i+1:i}[\ketbra{x_i}{x_i}] \ket{x_j} \bra{x_i} \Lambda_{i:i-1} \mathcal{P}_{i-1}^{(x_{i-1})} \hdots \Pcal_1^{(x_1)} \rho_1 \ket{x_i} }{\bra{x_i} \Lambda_{i:i-1} \mathcal{P}_{i-1}^{(x_{i-1})} \hdots \Pcal_1^{(x_1)} \rho_1 \ket{x_i}} \notag \\
    &= \bra{x_j} \Lambda_{j:j-1} \Ical_{j-1} \Lambda_{j-1:j-2} \hdots \Ical_{i+1} \Lambda_{i+1:i}[\ketbra{x_i}{x_i}] \ket{x_j} \quad \forall \ x_{i-1},\hdots,x_1.
\end{align}

In the case where \textit{no} measurements are made until time $t_i$, we similarly have
\begin{align}
    \mathbbm{P}(x_j|\Ical_{j-1:i+1},x_i,\Ical_{i-1:1}) &= \frac{\tr{\Pcal_j^{(x_j)} \Lambda_{j:j-1} \Ical_{j-1} \Lambda_{j-1:j-2} \hdots \Ical_{i+1} \Lambda_{i+1:i} \Pcal_i^{(x_i)} \Lambda_{i:i-1} \mathcal{I}_{i-1} \hdots \Ical_1 \rho_1}}{\tr{\Pcal_k^{(x_i)} \Lambda_{i:i-1} \Ical_{i-1} \hdots \Ical_1 \rho_1}} \notag \\
    &= \frac{\bra{x_j} \Lambda_{j:j-1} \Ical_{j-1} \Lambda_{j-1:j-2} \hdots \Ical_{i+1} \Lambda_{i+1:i}[\ketbra{x_i}{x_i}] \ket{x_j} \bra{x_i} \Lambda_{i:i-1} \Ical_{i-1} \hdots \Ical_1 \rho_1 \ket{x_i} }{\bra{x_i} \Lambda_{i:i-1} \Ical_{i-1} \hdots \Ical_1 \rho_1 \ket{x_i}} \notag \\
    &= \bra{x_j} \Lambda_{j:j-1} \Ical_{j-1} \Lambda_{j-1:j-2} \hdots \Ical_{i+1} \Lambda_{i+1:i}[\ketbra{x_i}{x_i}] \ket{x_j}. 
\end{align}

Thus, we see that both conditional probabilities are equal and independent of all measurement outcomes prior to $t_i$ i.e., we have $\mathbbm{P}(x_j|\Ical_{j-1:i+1},x_i,x_{i-1},\hdots,x_1)=\mathbbm{P}(x_j|\Ical_{j-1:i+1},x_i,\Ical_{i-1:1})=:\mathbbm{P}(x_j|x_i)$ and the sub-statistics are indeed Markovian. Regarding compatibility, note that for any combination of measuring or not in the times prior to $t_i$, the only changes to the above expressions occur in the numerator term that always cancels with the corresponding part in the denominator, and so compatibility also holds true. In other words, we have $\mathbbm{P}(x_j|\Ical_{j-1:i+1},x_i,(x\cup\Ical)_{i-1:1})=\mathbbm{P}(x_j|x_i)$ for all possible choices of $(x\cup\Ical)_{i-1:1}$, i.e., all possible choices of measuring or not at times $\{t_1, \hdots, t_{i-1}\}$ in the history. As for the classical case we demonstrated in App.~\ref{app::SubStatClass}, the argument above can be run in exactly the same way for more `complicated' subsets $\Gamma^{(i)}\subset \Tcal_n$, with the only difference being that the notation becomes slightly more cumbersome.

\section{Hidden Quantum Memory and Incompatibility}\label{app:qrf-hiddenquantummemory}

\begin{figure}[t]
    \centering
    \includegraphics[width=0.7\linewidth]{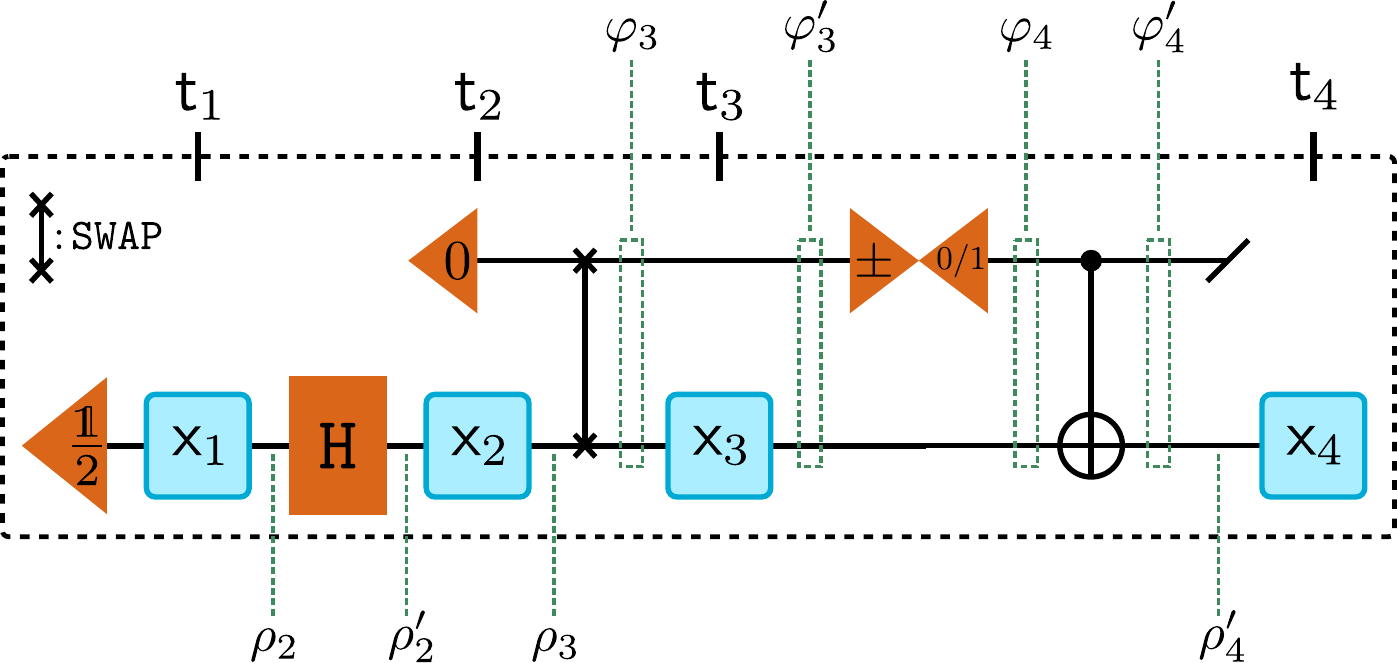}
    \caption{\textit{Circuit with hidden quantum memory.} For convenience, we reproduce the circuit provided in Fig.~\ref{fig::Circuit} in the main text. Additionally, to better facilitate orientation, the states that are explicitly mentioned throughout the proof are marked in green, i.e., the points in the circuit where the states $\rho_2, \rho_2', \rho_3, \varphi_3,\hdots$ occur.}
    \label{fig::Circuit_annotated}
\end{figure}

\subsection{Hidden Quantum Memory}

Here we explicitly calculate all sub-statistics of the example used regarding Thm.~\ref{thm:qrf-hiddennonmarkovianity} and show that, while the full statistics is Markovian, there are non-Markovian sub-statistics, i.e., we uncover hidden quantum memory. This phenomenon acts as a witness to the impossibility of a memoryless quantum dynamical model by way of contradiction with (the first part of) Lem.~\ref{lem:qrf-quantummemorylessimpliesmarkoviansubstatistics}, which states that any memoryless dynamics leads to Markovian sub-statistics. The circuit corresponding to the process we discuss is shown in Fig.~\ref{fig::Circuit_annotated}, where, for convenience, the states we explicitly calculate in the discussion below are annotated.

We begin with the full statistics. The probability over measurement outcomes at time $t_1$ are set by the initial state of the process, i.e.,
\begin{align}
    \mathbbm{P}(x_1) = \tr{\ketbra{x_1}{x_1} \rho_1},
\end{align}
with the post-measurement (sub-normalised) state given by $\rho_2(x_1) = \mathbbm{P}(x_1) \ketbra{x_1}{x_1}$. Without loss of generality, we choose $\rho_1 = \tfrac{\mathbbm{1}}{2}$ and thereby set $\mathbbm{P}(x_1=0) = \mathbbm{P}(x_1=1) = \tfrac{1}{2}$. The process then consists of a Hadamard gate, which rotates said post-measurement state (which is diagonal in the $\sigma_z$-basis) to the $\sigma_x$-basis, and we have
\begin{align}\label{eq:qrf-app-posthadamardstate}
    \rho^\prime_2(x_1) := H \rho_2(x_1) H = \begin{cases}
        \tfrac{1}{2} \ketbra{+}{+} \quad &\textup{for} \ x_1=0 \\
        \tfrac{1}{2} \ketbra{-}{-}\quad &\textup{for} \ x_1=1.
    \end{cases}
\end{align}
The experimenter then measures the $\sigma_z$ observable again, yielding the joint two-time statistics
\begin{align}
    \mathbbm{P}(x_2,x_1) = \frac{1}{4} \quad \forall \ x_2, x_1.
\end{align}
The state after the second measurement is independent of $x_1$ and given by
\begin{align}
    \rho_3(x_2,x_1) = \frac{1}{4} \ketbra{x_2}{x_2} \quad \forall \ x_2, x_1.
\end{align}
This state is then swapped with the environment, which is prepared in an arbitrary fiducial state $\tau$, which we set as the blank state $\ket{0}$. The joint system-environment state $\varphi_3$ immediately prior to the measurement at $t_3$ is given by
\begin{align}
    \varphi_3(x_2,x_1) = \texttt{SWAP} [ \rho_3(x_2,x_1) \otimes \tau ]  = \tau \otimes \rho_3(x_2,x_1) =  \frac{1}{4} \ketbra{0}{0} \otimes \ketbra{x_2}{x_2}.
\end{align}
The experimenter then measures the system at time $t_3$, recording the probabilities
\begin{align}
    \mathbbm{P}(x_3,x_2,x_1) = \begin{cases}
        \frac{1}{4} \quad &\text{for} \ x_3=0 \\
        0 \quad &\text{for} \ x_3=1 
    \end{cases}
    \quad \forall \ x_2,x_1.
\end{align}
This distribution is Markovian, as we have the conditional probabilities
\begin{align}
    \mathbbm{P}(x_3|x_2,x_1) = \begin{cases}
        1 \quad &\text{for} \ x_3=0 \\
        0 \quad &\text{for} \ x_3=1 
    \end{cases}
    \quad \forall \ x_2,x_1,
\end{align}
which are independent of $x_1$ [in fact, the statistics are `super'-Markovian as the conditional probabilities are even independent of $x_2$, so we have $\mathbbm{P}(x_3|x_2,x_1) = \mathbbm{P}(x_3|x_2) = \mathbbm{P}(x_3)$]. The system-environment state $\varphi^\prime_3(x_3,x_2,x_1)$ following the measurement at $t_3$ is
\begin{align}
    \varphi^\prime_3(x_3,x_2,x_1) = (\Pcal_3^{(x_3)}\otimes\Ical)[ \varphi_3(x_2,x_1) ]
     = \frac{1}{4} \ketbra{0}{0} \otimes \ketbra{x_2}{x_2},
\end{align}
i.e., the measurement at $t_3$ is non-invasive (note that the outcome $x_3=1$ cannot occur). Subsequently, a channel occurs that measures the environment in the $\sigma_x$-basis and feeds forward $\ket{0}$ ($\ket{1}$) whenever the measurement outcome is $+$ ($-$). The corresponding CPTP map is given by $\Upsilon[\,\bullet\,] = \sum_{k} Y^{k} \, \bullet \, Y^{k \dagger}$ with Kraus operators $Y^{0} = \ketbra{0}{+}$ and $Y^{1} = \ketbra{1}{-}$. Since $\braket{\pm | x_2} \braket{x_2 | \pm} = \tfrac{1}{2} \, \forall \, x_2$, this yields the system-environment state
\begin{align}
    \varphi_4(x_3,x_2,x_1) = (\Ical \otimes \Upsilon) [\varphi^\prime_3(x_3,x_2,x_1)] = \frac{1}{8} \ketbra{0}{0} \otimes \mathbbm{1}.
\end{align}
After this, a \verb!CNOT! gate on system and environment occurs (with the environment qubit acting as control), leading to
\begin{align}\label{eq:qrf-app-preparityfull}
    \varphi^\prime_4(x_3,x_2,x_1) = \texttt{CNOT} [\varphi_4(x_3,x_2,x_1)] = \frac{1}{8} (\ketbra{00}{00}+\ketbra{11}{11}).
\end{align}
The experimenter performs the final measurement at $t_4$, recording the probabilities
\begin{align}\label{eq:qrf-app-markovfullstatistics}
    \mathbbm{P}(x_4,x_3,x_2,x_1) = \begin{cases}
    \frac{1}{8} \quad &\text{for} \ x_3=0 \\
    0 \quad &\text{for} \ x_3=1
    \end{cases} \quad \forall \ x_4,x_2,x_1.
\end{align}
This distribution is indeed Markovian, as the conditional probabilities are
\begin{align}\label{eq:qrf-app-markovfullstatisticsconditional}
    \mathbbm{P}(x_4|x_3,x_2,x_1) = \begin{cases}
    \frac{1}{2} \quad &\textup{for} \ x_3=0 \\
    0 \quad &\textup{for} \ x_3=1 
    \end{cases} \quad \forall \ x_4, x_2,x_1,
\end{align}
[where we take the convention that conditioning on an event that cannot occur (i.e., $x_3=1$) gives conditional probability $0$]. Thus, the full joint statistics $\mathbbm{P}(x_4,x_3,x_2,x_1)$ is Markovian.

On the other hand, consider the situation in which the experimenter does not measure at time $t_2$, i.e., observes the sub-statistics $\mathbbm{P}(x_4,x_3,\Ical_2,x_1)$. Everything until Eq.~\eqref{eq:qrf-app-posthadamardstate} remains the same, but without measurement at $t_2$ we have the state
\begin{align}
    \rho_3(\Ical_2,x_1) = \mathcal{I}_2 [\rho_2^\prime(x_1)] = \frac{1}{2} \begin{cases}
        \ketbra{+}{+} \quad &\text{for} \ x_1 = 0 \\
        \ketbra{-}{-} \quad &\text{for} \ x_1 = 1.
    \end{cases}
\end{align}
The system is then swapped with the environment, yielding the joint state
\begin{align}
    \varphi_3(\Ical_2,x_1) = \texttt{SWAP} [ \rho_3(\Ical_2,x_1) \otimes \tau ]  = \tau \otimes \rho_3(\Ical_2,x_1) = \frac{1}{2} \begin{cases}
        \ketbra{0}{0} \otimes \ketbra{+}{+} \quad &\text{for} \ x_1 = 0 \\
        \ketbra{0}{0} \otimes \ketbra{-}{-} \quad &\text{for} \ x_1 = 1.
    \end{cases}
\end{align}
Measurement of the system at $t_3$ leads to the joint statistics
\begin{align}
    \mathbbm{P}(x_3,\Ical_2,x_1) = \begin{cases}
        \frac{1}{2} \quad &\text{for} \ x_3=0 \\
        0 \quad &\text{for} \ x_3=1 
    \end{cases}
    \quad \forall \ x_1.
\end{align}
Thus, we have the conditional probabilities
\begin{align}
    \mathbbm{P}(x_3|\Ical_2,x_1) = \begin{cases}
        1 \quad &\text{for} \ x_3=0 \\
        0 \quad &\text{for} \ x_3=1 
    \end{cases}
    \quad \forall \ x_1.
\end{align}
The system-environment state $\varphi^\prime_3(x_3,\Ical_2,x_1)$ following the measurement at $t_3$ is
\begin{align}
    \varphi^\prime_3(x_3,\Ical_2,x_1) = (\Pcal_3^{(x_3)}\otimes\Ical)[ \varphi_3(\Ical_2,x_1) ]
     = \frac{1}{2} \begin{cases}
        \ketbra{0}{0} \otimes \ketbra{+}{+} \quad &\text{for} \ x_1 = 0 \\
        \ketbra{0}{0} \otimes \ketbra{-}{-} \quad &\text{for} \ x_1 = 1.
    \end{cases}
\end{align}
After the map $\Upsilon$ on the environment, we have the system-environment state
\begin{align}
    \varphi_4(x_3,\Ical_2,x_1) = (\Ical \otimes \Upsilon) [\varphi^\prime_3(x_3,\Ical_2,x_1)] &= \frac{1}{2} \begin{cases}
        \ketbra{0}{0} \otimes \ketbra{0}{0} \quad &\text{for} \ x_1 = 0 \\
        \ketbra{0}{0} \otimes \ketbra{1}{1} \quad &\text{for} \ x_1 = 1
        \end{cases} \notag \\
        &= \frac{1}{2} \ketbra{0}{0} \otimes \ketbra{x_1}{x_1}.
\end{align}
Upon application of the \verb!CNOT! gate, the system-environment state is
\begin{align}\label{eq:qrf-app-preparitysub}
    \varphi^\prime_4(x_3,\Ical_2,x_1) = \texttt{CNOT} [\varphi_4(x_3,\Ical_2,x_1)] = \frac{1}{2} \ketbra{x_1}{x_1} \otimes \ketbra{x_1}{x_1}.
\end{align}
The experimenter finally performs the measurement at $t_4$ on the reduced state of the system $\rho_4^\prime(x_3,\Ical_2,x_1)=\tfrac{1}{2}\ketbra{x_1}{x_1}$ (with $x_3=0$ being the only possibility), recording the statistics
\begin{align}
    \mathbbm{P}(x_4,x_3,\Ical_2,x_1) = \begin{cases}
    \frac{1}{2} \delta_{x_4 x_1} \quad &\text{for} \ x_3=0 \\
    0 \quad &\text{for} \ x_3=1.
    \end{cases}
\end{align}
This sub-statistics is, however, non-Markovian, since the conditional probability at time $t_4$ depends on $x_1$. Explicitly, we have
\begin{align}
    \mathbbm{P}(x_4|x_3,\Ical_2,x_1) = \frac{\mathbbm{P}(x_4,x_3,\Ical_2,x_1)}{\mathbbm{P}(x_3,\Ical_2,x_1)} = \begin{cases} 
    \delta_{x_4 x_1} \quad &\textup{for} \ x_3=0 \\
    0 \quad &\textup{for} \ x_3=1
    \end{cases} \quad 
    \neq \mathbbm{P}(x_4|x_3).
\end{align}
As we have discussed in the main text, such non-Markovian sub-statistics cannot arise for a memoryless quantum dynamics probed by sharp, projective measurements (as is the case in this example), and therefore we conclude that the statistics observed---although Markovian on the whole---cannot be faithfully reproduced by a memoryless quantum dynamical model.

\subsection{Incompatibility}

\begin{figure}[t]
    \centering
    \includegraphics[width=0.7\linewidth]{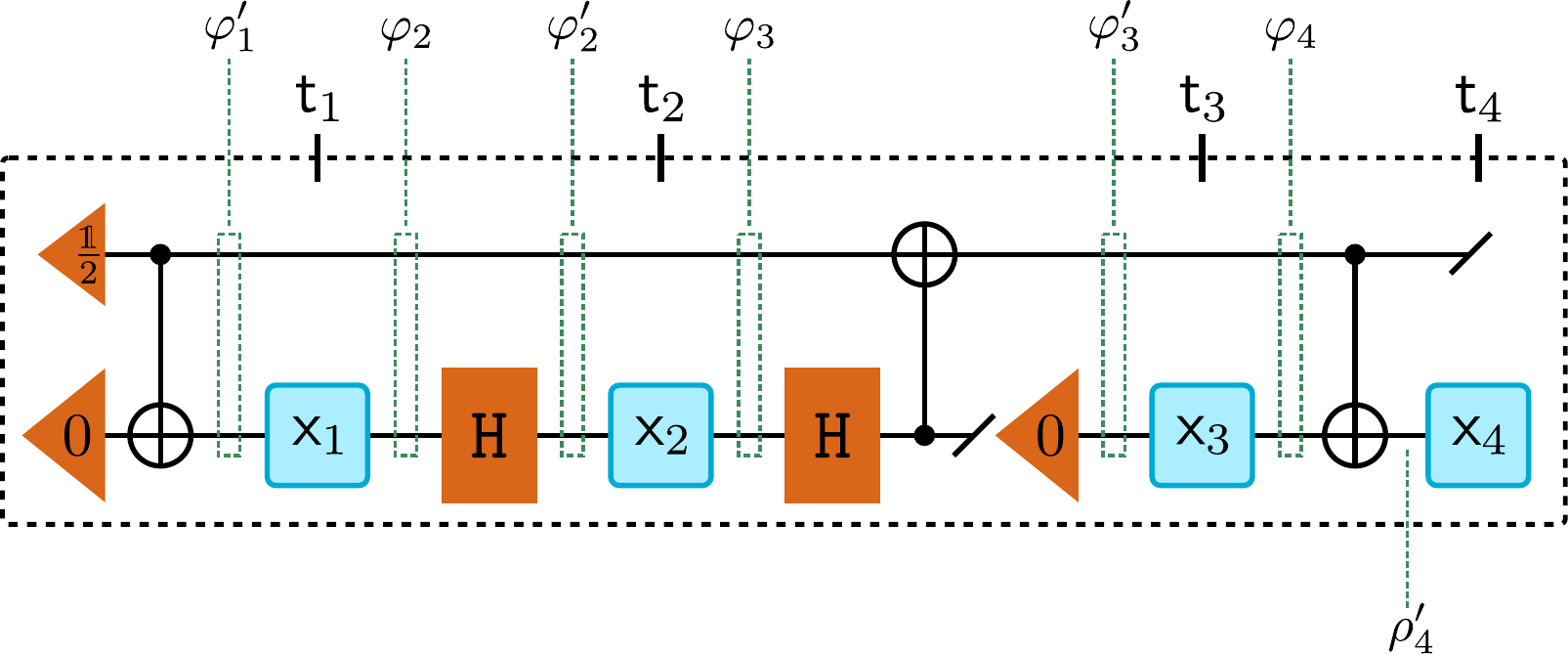}
    \caption{\textit{Circuit with incompatible Markovian (sub-)statistics.} For convenience, we reproduce the circuit provided in Fig.~\ref{fig::Circuit2} in the main text. Additionally, to better facilitate orientation, the states that are explicitly mentioned throughout the proof are marked in green, i.e., the points in the circuit where the states $\varphi_1', \varphi_2, \dots$ occur.}
    \label{fig::Circuit2_annotated}
\end{figure}

Here we explicitly calculate all sub-statistics of the example used regarding Thm.~\ref{thm:qrf-quantummarkovincompatible} and show that although they are all Markovian (i.e., unlike in Thm.~\ref{thm:qrf-hiddennonmarkovianity}, there is no explicit activation of hidden quantum memory witnessed via non-Markovian sub-statistics), they are nonetheless incompatible and therefore serve to witness the impossibility of a memoryless quantum dynamical description by way of contradiction with (the second part of) Lem.~\ref{lem:qrf-quantummemorylessimpliesmarkoviansubstatistics}.

Intuitively, the corresponding circuit is such that the state of the environment at time $t_3$, where system and environment are in a product state, does \textit{not} depend upon any previously observed measurement outcomes at times $t_1$ and $t_2$ (thus yielding Markovian (sub-)statistics for all conceivable subsets $\Gamma \subseteq \{t_1,t_2,t_3,t_4\}$ at which measurements can be performed), but rather only on \emph{whether or not} any such prior measurements were performed (thus leading to Markovian but incompatible \mbox{(sub-)}statistics. To see this explicitly, we now calculate the intermediate system-environment states at all relevant points throughout the circuit (see Fig.~\ref{fig::Circuit2_annotated} for better orientation). 

Firstly, we have 
\begin{gather} 
\varphi_1' = \frac{1}{2}(\ketbra{00}{00} + \ketbra{11}{11})\, .
\end{gather}
After time $t_1$, this state is transformed to either
\begin{gather}
\varphi_2(x_1) = \frac{1}{2} \ketbra{x_1x_1}{x_1x_1} \quad \text{or} \quad \varphi_2(\Ical_1) = \varphi_1' 
\end{gather}
depending on whether a measurement was performed or not. Following the first Hadamard gate applied to the system, we then have either
\begin{gather}
    \varphi_2^\prime(x_1) = \frac{1}{2} \ketbra{\pm}{\pm}^{x_1}\otimes\ketbra{x_1}{x_1} \quad \text{or} \quad \varphi_2^\prime(\mathcal{I}_1) = \frac{1}{2} (\ketbra{+0}{+0} + \ketbra{-1}{-1}).
\end{gather}
Here, $\ketbra{\pm}{\pm}^{x_1}$ is used to denote the state $\ketbra{+}{+}$ for $x_1=0$ and $\ketbra{-}{-}$ for $x_1=1$. Analogously, after time $t_2$, we have the system-environment states
\begin{gather}
\varphi_3(x_2,x_1) = \frac{1}{4}\ketbra{x_2x_1}{x_2x_1}, \quad \varphi_3(\Ical_2,x_1) = \frac{1}{2} \ketbra{\pm}{\pm}^{x_1}\otimes\ketbra{x_1}{x_1}, \quad \text{or} \quad \varphi_3(x_2,\Ical_1) = \frac{1}{4} \ketbra{x_2}{x_2} \otimes \mathbbm{1}.
\end{gather}
Between times $t_2$ and $t_3$, the system and environment undergo a Hadamard gate on the system, followed by a $\texttt{CNOT}$ gate (with system acting as control)), after which the system is discarded and re-prepared in the $\ket{0}$ state. Consequently, the system-environment state immediately prior to the measurement at $t_3$ is one of
\begin{gather}
\varphi_3'(x_2,x_1) = \frac{1}{8} \ketbra{0}{0} \otimes \mathbbm{1}, \quad \varphi_3'(\Ical_2,x_1) = \frac{1}{2}\ketbra{00}{00}, \quad \text{or} \quad \varphi_3'(x_2,\Ical_1) = \frac{1}{4} \ketbra{0}{0} \otimes \mathbbm{1}\, .
\end{gather}
Consequently, the state of the environment depends on whether or not measurements were performed at $t_1$ and $t_2$ (e.g., it is proportional to $\mathbbm{1}$ if both measurements were performed, and proportional to $\ketbra{0}{0}$ if only the measurement at $t_1$ was performed), while the reduced state of the system is always equal to $\ketbra{0}{0}$ for any combination of measurements and outcomes. With this, we see directly that 
\begin{gather}
\mathbbm{P}(x_3|x_2,x_1) = \mathbbm{P}(x_3|\Ical_2,x_1) = \mathbbm{P}(x_3|x_2,\Ical_1) = \delta_{x_30}, 
\end{gather}
and therefore the three-point statistics ending at time $t_3$ are Markovian (and compatible), since the corresponding conditional probabilities do not depend on \textit{any} previous outcomes. Analogously, since the system-environment state $\varphi_4$ immediately before the final $\texttt{CNOT}$ gate depends (at most) on the outcome $x_3$, but not on outcomes $x_2$ or $x_1$, \textit{all} statistics ending at time $t_4$ are also Markovian (note that for a four-step process, the statistics ending at times $t_3$ and $t_4$ are the only ones that have to be checked with respect to Markovianity). 

However, the resulting four-point conditional probabilities are not compatible. For example, we have 
\begin{gather}
\label{eqn::phi3}
\varphi_4(x_3,x_2,x_1) = \frac{1}{8} \delta_{x_3 0} \ketbra{x_3}{x_3} \otimes \mathbbm{1} \quad \text{and} \quad \varphi_4(x_3,\Ical_2,x_1) = \frac{1}{2} \delta_{x_3 0} \ketbra{x_3 0}{x_3 0}\, ,
\end{gather}
such that following the final \texttt{CNOT}, the reduced system state is either
\begin{gather}
\rho_4'(x_3,x_2,x_1) = \frac{1}{8} \delta_{x_30} \mathbbm{1} \quad \text{or} \quad \rho_4'(x_3,\Ical_2,x_1) = \frac{1}{2} \delta_{x_30} \ketbra{x_3}{x_3},
\end{gather}
depending on whether or not a measurement was made at time $t_2$.
Consequently, we can compute the corresponding statistics in either scenario:
\begin{gather}
\Pprob(x_4,x_3,x_2,x_1) = \frac{1}{8} \delta_{x_30} \quad \text{and} \quad \Pprob(x_4,x_3,\Ical_2,x_1) = \frac{1}{2} \delta_{x_30}\delta_{x_4x_3}. 
\end{gather}

In a similar vein, from Eq.~\eqref{eqn::phi3}, we can directly compute the relevant statistics for the processes ending at time $t_3$ to be
\begin{gather}
\Pprob(x_3,x_2,x_1) = \frac{1}{4} \delta_{x_30} \quad \text{and} \quad \Pprob(x_3,\Ical_2,x_1) = \frac{1}{2} \delta_{x_30}. 
\end{gather}
Finally, combining the above two equations, we obtain
\begin{gather}
\Pprob(x_4|x_3,x_2,x_1) = \frac{1}{2} \delta_{x_30} \quad \text{and} \quad \Pprob(x_4|x_3,\Ical_2,x_1) = \delta_{x_30}\delta_{x_4x_3}. 
\end{gather}
Although both of these conditional probability distributions are Markovian (i.e., show no dependence on outcomes prior to $x_3$), they nonetheless differ depending upon whether some intervention was made overall in the past, i.e., they are incompatible. Since these conditional probabilities are incompatible, there cannot be a memoryless quantum dynamics that faithfully reproduces them (as demonstrated in Lem.~\ref{lem:qrf-quantummemorylessimpliesmarkoviansubstatistics}), proving the claim.

\end{document}